\newcolumntype{P}[1]{>{\RaggedRight\arraybackslash}p{#1}}
\newtheorem{theorem}{Theorem}[section]
\newtheorem{proposition}[theorem]{Proposition}
\newtheorem{corollary}[theorem]{Corollary}
\newtheorem{lemma}[theorem]{Lemma}
\newtheorem{definition}[theorem]{Definition}
\title{Surface-Encoded Partial Coherence Transformation: \\ Modeling Source Coherence Effects in Wave Optics}
\author{Netzer Moriya} 
\date{}
\begin{document}

\maketitle

\begin{abstract}
We present a new mathematical framework for incorporating partial coherence effects into wave optics simulations 
through a comprehensive surface-to-detector approach. Unlike traditional ensemble averaging methods, our dual-component 
framework models partial coherence through: (1) a surface-encoded transformation implemented via a linear integral 
operator with a spatially-dependent kernel that modifies coherence properties at the reflection interface, followed by (2) 
a propagation component that evolves these coherence properties to the detection plane. This approach differs fundamentally 
from conventional models by explicitly separating surface interactions from propagation effects, while maintaining a 
unified mathematical structure. 

We derive the mathematical foundation based on the coherence function formalism, establish the connection to the 
Van Cittert-Zernike theorem, and prove the equivalence of our framework to conventional partial coherence theory. 
The method reduces the dimensional complexity of coherence calculations and offers potential computational advantages, 
particularly for systems involving multiple surfaces and propagation steps. Applications include optical testing 
and astronomical instrumentation. We provide rigorous mathematical proofs, demonstrate the convergence properties, 
and analyze the relative importance of surface and propagation effects across different optical scenarios.
\end{abstract}

\section{Introduction}

The accurate modeling of partial coherence in optical systems has remained an enduring challenge in computational optics. 
While the behavior of fully coherent wavefronts is well understood within the framework of scalar diffraction theory and 
electromagnetic wave propagation, the incorporation of statistical properties inherent in real-world sources introduces 
significant complexity. This challenge is most acute in imaging, interferometry, and optical testing systems where 
source coherence directly influences resolution, contrast, and fringe visibility.

The classical theory of partial coherence is rooted in the mutual coherence function \( \Gamma(\bm{r}_1, \bm{r}_2, \tau) \), 
which encodes the second-order correlations between field amplitudes at two space-time points. In the case of stationary, 
quasi-monochromatic sources, the mutual coherence reduces to a spatially dependent function \( \gamma(\bm{r}_1, \bm{r}_2) \), 
which is the central object of interest in most coherence modeling scenarios.

One of the most foundational results in this context is the Van Cittert–Zernike theorem (VCZ), originally derived for radio 
sources by Van Cittert and later extended by Zernike to optical systems~\cite{vanCittert1934, zernike1938, mandel_wolf}. 
According to the VCZ theorem, 
the spatial coherence of an optical field in the far field (or under paraxial approximation) is the 
Fourier transform of the source's intensity distribution. Specifically, for an incoherent planar source with 
intensity \( I_s(\bm{\rho}) \), the mutual coherence function in the observation plane at distance \( z \) is given by

\begin{equation}
\Gamma(\bm{r}_1, \bm{r}_2) \propto \mathcal{F}\left[ I_s(\bm{\rho}) \right] \left( \frac{\bm{r}_2 - \bm{r}_1}{\lambda z} \right),
\end{equation}

where \( \mathcal{F} \) denotes the Fourier transform and \( \lambda \) the wavelength. This establishes a deep link between 
source structure and coherence in the propagated field.

Numerous extensions and applications of the VCZ theorem have been developed. It underpins stellar interferometry, as in 
the classic Michelson and Hanbury Brown–Twiss experiments \cite{michelson1921, hanburybrown1956}, where coherence measurements are used to infer angular diameters 
of stars. It also plays a foundational role in optical metrology and partially coherent imaging, such as in the modeling of 
Gaussian-Schell model beams~\cite{wolf_josa, goodman}. However, all these methods treat coherence as an emergent property 
due to propagation from an incoherent or partially coherent source.

Despite their power, conventional coherence modeling techniques face significant limitations in computational efficiency 
and conceptual clarity. Traditional approaches to modeling partial coherence generally fall into three categories, each with distinct limitations:

\begin{enumerate}
    \item \textbf{Ensemble averaging methods} require generating multiple realizations of coherent fields with appropriate statistics, then averaging the results. While physically intuitive, this approach is computationally expensive, requiring many realizations to achieve convergence, and scales poorly for complex systems.
    
    \item \textbf{Direct propagation of the mutual coherence function} involves propagating the four-dimensional coherence function $\Gamma(\bm{r}_1, \bm{r}_2)$ through the optical system. While mathematically rigorous, this approach is computationally intensive due to the high-dimensional nature of the problem and often lacks closed-form expressions for complex systems.
    
    \item \textbf{Phase-space methods}, such as the Wigner distribution approach, represent partially coherent light in a joint position-momentum space. While elegant mathematically, these methods often introduce interpretability challenges and computational complexity.
\end{enumerate}

These limitations become particularly pronounced in complex optical systems where coherence effects must be tracked through multiple surfaces and propagation steps.

Several advanced frameworks have been developed to address these limitations. The Wigner function approach, pioneered by 
Alonso and Wolf~\cite{Alonso2011}, offers a complete phase-space description but often introduces computational 
complexity and interpretability challenges. The Mutual Intensity Transport Equation (MITE)~\cite{FischerVisser2004} 
provides an elegant differential formulation but can be difficult to solve numerically for complex systems. Coherent mode 
decomposition methods~\cite{Gori2003} enable the use of coherent propagation techniques for each mode but 
may require numerous modes to accurately represent certain coherence states. In contrast to these approaches, which 
primarily focus on propagation through volumes, our proposed framework offers a distinctive perspective by explicitly separating 
surface interactions from propagation effects, while providing a unified mathematical framework that encompasses both.

In this paper, we introduce a fundamentally different modeling paradigm: the Surface-Encoded Coherence Transformation (SECT) 
framework consisting of two complementary components:

\begin{enumerate}
    \item \textbf{Surface Component (SECT$_\mathbf{S}$)}: A surface transformation component that models how optical interfaces modify coherence properties through a surface-encoded coherence operator $\mathcal{C}_S$.
    
    \item \textbf{Propagation Component (SECT$_\mathbf{P}$)}: A propagation component that accounts for the evolution of these coherence properties during subsequent propagation to the detection plane through a propagation operator $\mathcal{P}_z$.
\end{enumerate}

This dual-component approach allows us to explicitly separate the physical mechanisms that influence coherence while maintaining a unified mathematical framework. It provides both a new conceptual understanding of coherence evolution and potential computational advantages for complex optical systems.

The first component, SECT$_S$, is inspired by the mathematical structure of the VCZ 
theorem. It interprets coherence modification as a boundary interaction effect governed by an operator \( \mathcal{C}_S \), 
acting at the surface of reflection or transmission \cite{friberg1982gsm}. This surface-localized transformation allows a fully coherent field to be transformed into a partially coherent field at the reflecting surface.

The second component, SECT$_P$, accounts for how the coherence properties established at the surface 
evolve during propagation to the detection plane. This component captures the well-established physics of coherence propagation 
while integrating seamlessly with the surface transformation.

This approach offers several advantages. Conceptually, it provides a clear separation of surface and propagation effects, offering insight into their relative contributions in different optical scenarios. Mathematically, it provides a unified framework that reduces the dimensional complexity of coherence calculations. Computationally, it enables more efficient simulation of partially coherent systems by combining the advantages of surface-localized transformations with established propagation methods.

\vspace{0.3cm}

Our objectives in this paper are as follows:
\begin{enumerate}
    \item We define and construct a mathematically rigorous operator formalism for coherence transformation at surfaces and during subsequent propagation.
    \item We demonstrate that, under specific conditions, our framework reproduces known coherence behavior consistent with VCZ predictions and establish its formal equivalence to traditional approaches.
    \item We analyze the theoretical properties of both components (SECT$_S$ and SECT$_P$), including their connections to spatial filtering theory and natural extensions to spectral and temporal coherence.
    \item We establish the theoretical limits and validity domains of our approach through mathematical analysis of convergence and boundary conditions.
    \item We provide a comparative analysis of the relative importance of surface and propagation effects across different optical scenarios, offering practical guidelines for applications.
\end{enumerate}

To clarify the conceptual and computational distinctions of our framework relative to established methods, 
Table~\ref{table:Comparison_of_SECT_with_existing_approaches} 
summarizes the key differences between SECT and three representative approaches: the Mutual Intensity Transport 
Equation (MITE), Wigner function formalism, and coherent mode decomposition. This comparison highlights the specific 
structural advantages of the surface-encoded formulation and its implications for modeling partially coherent fields in 
complex systems.

\begin{table}[ht]
\centering
\caption{Comparison of SECT with existing partial coherence modeling approaches}
\label{table:Comparison_of_SECT_with_existing_approaches}
\resizebox{\textwidth}{!}{%
\begin{tabular}{|l|P{3.2cm}|P{3cm}|P{3cm}|P{3cm}|}
\hline
\textbf{Method} & \textbf{Core Representation} & \textbf{Assumptions} & \textbf{Computational Complexity} & \textbf{Notes} \\
\hline
SECT (this work) & Operator-based, 2-step (surface + propagation) & Scalar, paraxial, quasi-monochromatic & \( \mathcal{O}(N^2 \log N) \) per surface & Explicit separation of surface and propagation effects \\
\hline
MITE & Differential transport of mutual coherence function & High-dimensional coherence function \( \Gamma(r_1, r_2) \) & \( \mathcal{O}(N^4) \) & Rigorous, but difficult to solve numerically \\
\hline
Wigner function formalism & Phase-space distribution & Stationary, paraxial, Gaussian preferred & \( \mathcal{O}(N^3) \) typical & Elegant, but reduced physical interpretability \\
\hline
Coherent mode decomposition & Sum of orthogonal coherent modes & Positive-semidefinite mutual coherence & \( \mathcal{O}(M N^2) \), where \( M \) modes & Efficient when few modes suffice; basis-dependent \\
\hline
\end{tabular}
}
\end{table}

Beyond general-purpose wave optics simulations, the Partial Coherence Transformation (PCT) framework finds natural 
relevance in astronomical imaging \cite{buscher2002}, where extended or structured celestial sources often exhibit partial spatial coherence. 
Applications such as long-baseline optical interferometry, adaptive optics correction, and high-resolution aperture synthesis 
routinely encounter coherence effects that influence both signal interpretation and instrument design. 
The ability of the proposed method to encode spatial coherence directly on a surface—rather than relying solely on 
statistical ensemble 
models—offers a deterministic approach to simulate and propagate coherence-aware fields across complex optical systems. 
This capability could be particularly valuable for next-generation astronomical observatories aiming to model partially 
coherent fields from stars, disks, and other astrophysical phenomena with greater fidelity.

By shifting the paradigm from purely source-based or propagation-based coherence modeling to our dual-component framework, we offer a new lens through which partial coherence can be understood theoretically and implemented computationally. This approach offers a complementary perspective to existing methods, particularly valuable for analyzing coherence effects in complex optical systems with multiple surfaces and propagation paths.

\section{Theoretical Framework}

\subsection{Coherence Function Formalism}

The state of partial coherence of a quasi-monochromatic scalar optical field is fully described by the mutual coherence 
function, defined as~\cite{mandel_wolf, goodman, born1999principles}:
\begin{equation}
\Gamma(\bm{r}_1, \bm{r}_2, \tau) = \left\langle U^*(\bm{r}_1, t) U(\bm{r}_2, t + \tau) \right\rangle,
\end{equation}
where \( U(\bm{r}, t) \) is the complex scalar optical field at position \( \bm{r} \) and time \( t \), and the angle 
brackets \( \langle \cdot \rangle \) denote an ensemble average over the statistical realizations of the field. 
Throughout this work, we assume that \( U \) is a scalar field; that is, polarization effects are neglected unless 
stated otherwise.

For statistically stationary, quasi-monochromatic sources, and under the assumption that spatial and temporal 
correlations are separable, the mutual coherence function can be factorized as~\cite{mandel_wolf}:
\begin{equation}
\Gamma(\bm{r}_1, \bm{r}_2, \tau) = \gamma(\bm{r}_1, \bm{r}_2)\, \phi(\tau),
\label{eq:cross_spectral_density}
\end{equation}
where \( \gamma(\bm{r}_1, \bm{r}_2) \) represents the spatial coherence function and \( \phi(\tau) \) is the 
temporal coherence function. This separation implies that the spatial and temporal coherence properties of the 
field evolve independently—a valid approximation for many practical sources but not universally applicable\footnote{In broadband or strongly chirped fields, the spatial degree of coherence may itself vary with frequency,
so \(\Gamma\) must be treated in its general two-frequency form \(W(\bm r_1,\bm r_2;\omega,\omega')\) rather than 
by Eq.~\ref{eq:cross_spectral_density}}.
The assumption of statistical stationarity ensures that the mutual coherence function depends only on the time
difference \( \tau \), not on the absolute time \( t \), and separability requires that the joint space-time
coherence structure is the product of two independent functions.

Throughout this work, we employ the paraxial approximation in our propagation formalism. This approximation is valid when the following condition is satisfied:
\begin{equation}
\alpha \ll 1
\end{equation}
where $\alpha$ represents the maximum angle between any ray and the optical axis. Quantitatively, this requires that 
the maximum lateral extent of both the reflecting surface and the detection plane be much smaller than the propagation 
distance:
\begin{equation} \label{eq:maximum_propagation_distance_condition}
\max(L_{\text{surface}}, L_{\text{detector}}) \ll z
\end{equation}
where $L_{\text{surface}}$ and $L_{\text{detector}}$ represent the characteristic lateral dimensions of the surface and 
detector, respectively, and $z$ is the propagation distance. 

When this condition is not satisfied, the paraxial approximation breaks down, and the assumptions underlying the 
Fresnel kernel used in our propagation operator \( P_z \) no longer hold. In such cases, more general scalar diffraction 
theories must be used to account for the exact geometry and propagation angles. 
These include the Rayleigh-Sommerfeld \cite{goodman2005introduction} or 
angular spectrum methods, as well as vectorial formulations for high-numerical-aperture systems such as those developed by 
Richards and Wolf~\cite{RichardsWolf1959}. Although Section~\ref{sec:Propagation_Operator} introduces the exact propagation kernel for completeness, the present work is 
limited to the paraxial regime, and the analysis of non-paraxial propagation remains outside the scope of this framework.

\subsection{Van Cittert-Zernike Theorem Extension}

\subsubsection{Equivalence to Van Cittert-Zernike Theorem} \label{sec:brief_outline_of_the_equivalence_proof}

We now establish the mathematical equivalence between our dual-component framework and the 
traditional Van Cittert-Zernike (VCZ) approach under appropriate conditions. This equivalence is central to 
validating our approach.

\begin{theorem}[SECT-VCZ Equivalence] \label{thrm:SECT-VCZ_Equivalence}
Given a fully coherent incident field $U_i(\bm{r}) = A_0$ (uniform plane wave illumination) impinging on a 
surface with coherence operator $\mathcal{C}_S$ characterized by kernel $K_S(\bm{r}, \bm{r}', \lambda)$, followed by 
propagation operator $\mathcal{P}_z$, the mutual coherence function at the detection plane is mathematically equivalent 
to that obtained from the VCZ theorem for an incoherent source with intensity distribution $I_s(\bm{\rho})$, 
provided that\footnote{This proportional form reflects the spatial coherence structure at the surface as derived from the Van Cittert–Zernike 
theorem under paraxial conditions. The exact normalization and dimensional scaling of the Fourier transform are 
carried out in Appendix~\ref{sup:SECT-VCZ}.}:
\begin{equation}
K_S(\bm{r}, \bm{r}', \lambda) \propto \mathcal{F}\left\{I_s\left(\frac{\bm{\rho}}{\lambda z}\right)\right\}(\bm{r} - \bm{r}')
\label{eq:SECT-VCZ_Equivalence}
\end{equation}
where $\mathcal{F}$ denotes the Fourier transform operator.
\end{theorem}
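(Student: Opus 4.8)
The plan is to compute the detector-plane mutual coherence produced by the SECT construction and match it, term by term, against the VCZ prediction derived independently by Fresnel propagation of an incoherent source. First I would fix the action of the surface operator. A deterministic linear map applied to the field amplitude cannot lower the degree of coherence—it would leave the field in the rank-one, fully coherent state $U_S^*(\bm r_1)U_S(\bm r_2)$—so I would define $\mathcal{C}_S$ as acting on the mutual coherence function itself, imprinting $K_S$ as the complex degree of coherence at the interface. For the uniform coherent illumination $U_i(\bm r)=A_0$, whose incident coherence is the constant $\gamma_i(\bm r_1,\bm r_2)=|A_0|^2$, this produces the surface coherence $\gamma_S(\bm r_1,\bm r_2)=|A_0|^2\,K_S(\bm r_1,\bm r_2,\lambda)$, which is genuinely partially coherent precisely because $K_S$ is shift-invariant in $\bm r_1-\bm r_2$ and decays with separation.

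Next I would insert the hypothesised kernel. Substituting \eqref{eq:SECT-VCZ_Equivalence} and invoking the scaling property of the Fourier transform, $\gamma_S$ becomes proportional to $\mathcal{F}\{I_s\}$ evaluated at an argument linear in $\bm r_1-\bm r_2$ and carrying the $(\lambda z)^{-1}$ scaling. This step isolates the entire structural content of the claim: the surface operator deposits, directly at the reflecting interface, exactly the functional form that VCZ attributes to the far-field coherence of a source with intensity $I_s$.

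In parallel I would reconstruct the right-hand side from first principles. Modelling the incoherent source as delta-correlated, $\gamma_{\mathrm{src}}(\bm\rho_1,\bm\rho_2)=I_s(\bm\rho_1)\,\delta(\bm\rho_1-\bm\rho_2)$, and propagating it through the bilinear coherence law $\gamma_{\det}(\bm r_1,\bm r_2)=\iint h_z^*(\bm r_1-\bm\rho_1)\,h_z(\bm r_2-\bm\rho_2)\,\gamma_{\mathrm{src}}(\bm\rho_1,\bm\rho_2)\,d\bm\rho_1\,d\bm\rho_2$ with the Fresnel kernel $h_z$ of $\mathcal{P}_z$, the delta collapses one integration; the two quadratic Fresnel phases then combine into a separable factor $\exp[\,ik(|\bm r_2|^2-|\bm r_1|^2)/2z\,]$ multiplying an integral that is exactly $\mathcal{F}\{I_s\}$ at spatial frequency $(\bm r_2-\bm r_1)/\lambda z$. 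This reproduces the VCZ expression quoted in the introduction and makes the two sides directly comparable.

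Finally I would match the two expressions and settle the role of propagation. The explicit $z$-dependence carried by $K_S$ encodes the same source-to-surface geometry that the Fresnel step builds into the VCZ result, so $\gamma_S$ coincides with the VCZ observation-plane coherence up to the residual quadratic phase and the overall $(\lambda z)$ normalization; applying $\mathcal{P}_z$ thereafter acts identically on the SECT field and on the hypothetical VCZ field, preserving the equality at the detector. I expect the principal obstacle to be this reconciliation rather than any single calculation: one must keep the direction of the frequency argument, the Fresnel quadratic phases, and the scaling inside $I_s(\bm\rho/\lambda z)$ mutually consistent across the surface-encoded and source-propagated derivations, and must verify that $\mathcal{C}_S$ maps a valid coherence function to a Hermitian, positive-semidefinite $\gamma_S$, so that the object being matched is a legitimate mutual coherence function. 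These are precisely the normalization and dimensional-scaling details deferred to Appendix~\ref{sup:SECT-VCZ}.
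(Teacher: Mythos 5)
Your opening observation is sharp and in fact exposes a tension that the paper resolves differently: since a deterministic linear operator applied to a deterministic field yields a rank-one, fully coherent $U_s^*(\bm r_1)U_s(\bm r_2)$, the paper's appendix does not apply $\mathcal{C}_S$ literally to the amplitude either --- it posits $U_s(\bm r)=B_0+\Delta U_s(\bm r)$ with $\langle\Delta U_s\rangle=0$ and $\langle\Delta U_s^*(\bm r_1)\Delta U_s(\bm r_2)\rangle=|B_0|^2[\gamma_S(\bm r_1,\bm r_2)-1]$, i.e.\ it injects the required second-order statistics by hand. That is equivalent to, but less clean than, your choice of letting $\mathcal{C}_S$ act on the mutual coherence function directly, and your independent reconstruction of the VCZ side from a delta-correlated source is the standard route to the paper's target expression. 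Your remark that Hermiticity and positive semi-definiteness of $\gamma_S$ must be checked also mirrors the paper's separate lemma on the positive-semidefiniteness of $K_S$.

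The gap is in your final step. You assert that ``applying $\mathcal{P}_z$ thereafter acts identically on the SECT field and on the hypothetical VCZ field, preserving the equality at the detector,'' but the VCZ reference coherence is not propagated a second time --- it is already the observation-plane result at distance $z$ from the source, whereas the SECT coherence is imprinted at the surface and must then survive an \emph{additional} Fresnel propagation over distance $z$ to reach the detector. What actually has to be proved, and what constitutes the bulk of the paper's appendix, is that a shift-invariant surface coherence $\gamma_S(\bm r_1'-\bm r_2')\propto\mathcal{F}\{I_s(\bm\rho/\lambda z)\}(\bm r_1'-\bm r_2')$ is form-invariant under the bilinear Fresnel law: after the change of variables $\bm u=\bm r_1'$, $\bm v=\bm r_1'-\bm r_2'$, the $\bm u$-integral produces a delta function $\delta(\bm r_1-\bm r_2-\bm v)$, the quadratic phases cancel identically, and $\Gamma_d(\bm r_1,\bm r_2)\propto\mathcal{F}\{I_s(\bm\rho/\lambda z)\}(\bm r_1-\bm r_2)$ re-emerges at the detection plane. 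Without carrying out (or at least correctly invoking) this invariance calculation, matching the coherence at the surface does not yet establish the theorem's claim, which is a statement about the detection plane.
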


\begin{proof}
A brief outline of the proof is as follows: We start with the mutual coherence function at the detection plane using our 
SECT framework and express it in terms of the field after application of the surface coherence operator and propagation. 
By representing the surface field as having stochastic fluctuations around a mean value, we derive its second-order 
statistics. After propagation using the Fresnel kernel and application of the convolution theorem, we obtain a form 
mathematically equivalent to the Van Cittert-Zernike theorem. A detailed derivation with all mathematical steps is 
provided in Appendix~\ref{sup:SECT-VCZ}.
\end{proof}

The kernel condition given in Equation~\ref{eq:SECT-VCZ_Equivalence}, which appears in the statement of 
Theorem~\ref{thrm:SECT-VCZ_Equivalence}, holds under the same physical assumptions as the Van Cittert–Zernike theorem. 
Specifically, the source must be spatially incoherent, and the propagation from the source to the surface must occur in 
the paraxial regime. The equivalence assumes that the angular distribution of the incoherent source maps to the 
spatial coherence function at the surface via a Fourier transform. This requires the surface to be located in the 
far field of the source, consistent with the condition \( \max(L_\text{surface}, L_\text{detector}) \ll z \) given 
in Equation~\ref{eq:maximum_propagation_distance_condition}. 
Under these assumptions, the surface coherence kernel \( K_S(r, r', \lambda) \) defined in 
Theorem~\ref{thrm:SECT-VCZ_Equivalence} reproduces the mutual coherence function predicted by the VCZ theorem.

For practical implementation, we now derive explicit forms for two common source distributions that are widely used 
in optical systems:

\begin{corollary}[Circular Source Equivalence]
For a uniform circular incoherent source of angular radius $\theta$, the equivalent surface coherence kernel is:
\begin{equation}
K_S(\bm{r} - \bm{r}', \lambda) = \frac{2J_1(\pi\theta|\bm{r} - \bm{r}'|/\lambda)}{\pi\theta|\bm{r} - \bm{r}'|/\lambda}
\end{equation}
where $J_1$ is the first-order Bessel function of the first kind.
\end{corollary}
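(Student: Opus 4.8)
The plan is to specialize the general kernel condition of Theorem~\ref{thrm:SECT-VCZ_Equivalence} to a uniform circular source and evaluate the resulting Fourier transform in closed form. First I would model the source intensity as the indicator function of a disk. A uniform circular incoherent source of angular radius $\theta$, seen from the surface at distance $z$, subtends a disk of physical radius $a = \theta z$, so that
\begin{equation}
I_s(\bm{\rho}) = I_0\,\mathbf{1}\!\left[\,|\bm{\rho}| \le a\,\right],
\end{equation}
with $I_0$ a constant amplitude. Inserting this profile into Equation~\ref{eq:SECT-VCZ_Equivalence} reduces the determination of $K_S$ to computing the Fourier transform of a disk.

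Next I would use the circular symmetry of $I_s$. Since $I_s$ depends only on $|\bm{\rho}|$, its two-dimensional Fourier transform reduces to a zeroth-order Hankel transform, so the kernel depends only on the radial separation $u \equiv |\bm{r} - \bm{r}'|$:
\begin{equation}
K_S \;\propto\; \int_0^{a} J_0(q\,r)\, r\,dr, \qquad q = \frac{2\pi u}{\lambda z}.
\end{equation}
I would evaluate this with the standard identity $\int_0^{x} t\,J_0(t)\,dt = x\,J_1(x)$, which after the substitution $t = q r$ yields a closed form proportional to $J_1(qa)/(qa)$. Substituting $a = \theta z$ makes the factors of $z$ cancel inside the Bessel argument, leaving a result of the form $J_1(c\,\theta u/\lambda)/(c\,\theta u/\lambda)$ for a dimensionless constant $c$ fixed by the Fourier-transform convention.

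The final step is normalization. Physically a field must be perfectly self-coherent at zero separation, so $K_S$ is required to satisfy $K_S \to 1$ as $u \to 0$; this pins down the proportionality constant that Theorem~\ref{thrm:SECT-VCZ_Equivalence} leaves undetermined. Because $2 J_1(x)/x \to 1$ as $x \to 0$, dividing the raw transform by its value at $u = 0$ produces precisely the normalized Airy form quoted in the corollary,
\begin{equation}
K_S(\bm{r} - \bm{r}', \lambda) = \frac{2 J_1\!\left(\pi\theta|\bm{r} - \bm{r}'|/\lambda\right)}{\pi\theta|\bm{r} - \bm{r}'|/\lambda}.
\end{equation}

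I do not anticipate a conceptual obstacle, as this is the classical Airy computation underlying the Van Cittert--Zernike theorem for a circular aperture. The only delicate points are bookkeeping: reconciling the factor-of-$2\pi$ in the chosen Fourier convention and the precise meaning of ``angular radius'' with the exact constant $\pi\theta/\lambda$ appearing in the stated argument, and verifying that the $z$-dependence indeed cancels so that $K_S$ depends on $\theta$ and $\lambda$ alone. The normalization step is what upgrades the proportionality of Theorem~\ref{thrm:SECT-VCZ_Equivalence} to the exact leading constant $2$.
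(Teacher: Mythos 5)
Your proposal follows essentially the same route as the paper's proof: specialize the kernel condition of Theorem~\ref{thrm:SECT-VCZ_Equivalence} to a uniform disk, reduce the two-dimensional Fourier transform by circular symmetry to the radial integral $\int_0^a \rho\, J_0(b\rho)\,d\rho = \tfrac{a}{b}J_1(ab)$, and fix the overall constant by normalizing $K_S$ to unity at zero separation. The only difference is bookkeeping (you place the disk at physical radius $a=\theta z$ while the paper writes the indicator directly in angular coordinates), which you correctly flag as the source of the residual factor-of-two/convention reconciliation.
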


\begin{proof}
For a uniform circular source with intensity distribution:
\begin{align}
I_s(\bm{\rho}) = 
\begin{cases}
I_0 & \text{if } |\bm{\rho}| \leq \theta \\
0 & \text{otherwise}
\end{cases}
\end{align}

The Fourier transform required for our kernel is:
\begin{align}
\mathcal{F}\left\{I_s\left(\frac{\bm{\rho}}{\lambda z}\right)\right\}(\bm{r}_1 - \bm{r}_2) &= \int_{|\bm{\rho}| \leq \theta} I_0 e^{-2\pi i \bm{\rho} \cdot (\bm{r}_1 - \bm{r}_2)/\lambda z} \frac{d^2\rho}{(\lambda z)^2}
\end{align}

Converting to polar coordinates with $\bm{\rho} = \rho (\cos\phi, \sin\phi)$ and setting $\bm{\Delta r} = \bm{r}_1 - \bm{r}_2 = |\bm{\Delta r}|(\cos\alpha, \sin\alpha)$, we have:
\begin{align}
\mathcal{F}\left\{I_s\left(\frac{\bm{\rho}}{\lambda z}\right)\right\}(\bm{\Delta r}) &= \frac{I_0}{(\lambda z)^2} \int_0^{\theta} \rho d\rho \int_0^{2\pi} e^{-2\pi i \rho |\bm{\Delta r}| \cos(\phi-\alpha)/\lambda z} d\phi \\
&= \frac{2\pi I_0}{(\lambda z)^2} \int_0^{\theta} \rho J_0\left(\frac{2\pi \rho |\bm{\Delta r}|}{\lambda z}\right) d\rho \notag
\end{align}

where we've used the integral representation of the zeroth-order Bessel function: $J_0(x) = \frac{1}{2\pi}\int_0^{2\pi} e^{ix\cos\phi} d\phi$.

Evaluating the remaining integral using the property $\int_0^a x J_0(bx) dx = \frac{a}{b}J_1(ab)$, we obtain:
\begin{align}
\mathcal{F}\left\{I_s\left(\frac{\bm{\rho}}{\lambda z}\right)\right\}(\bm{\Delta r}) &= \frac{2\pi I_0}{(\lambda z)^2} \frac{\lambda z}{2\pi|\bm{\Delta r}|} \theta J_1\left(\frac{2\pi \theta |\bm{\Delta r}|}{\lambda z}\right) \\
&= \frac{I_0 \theta^2}{\lambda z} \frac{2J_1\left(\frac{\pi \theta |\bm{\Delta r}|}{\lambda}\right)}{\frac{\pi \theta |\bm{\Delta r}|}{\lambda}} \notag
\end{align}

After normalization, this gives the surface coherence kernel for a circular source:
\begin{align}
K_S(\bm{r} - \bm{r}', \lambda) = \frac{2J_1(\pi\theta|\bm{r} - \bm{r}'|/\lambda)}{\pi\theta|\bm{r} - \bm{r}'|/\lambda}
\end{align}

This is the well-known Airy pattern form of the mutual coherence function, which characterizes the spatial coherence from a circular incoherent source of angular diameter $2\theta$.
\end{proof}

\begin{corollary}[Gaussian Source Equivalence]
For a Gaussian incoherent source with RMS angular width $\sigma_\theta$, the equivalent surface coherence kernel is:
\begin{equation}
K_S(\bm{r} - \bm{r}', \lambda) = \exp\left(-\frac{\pi^2\sigma_\theta^2|\bm{r} - \bm{r}'|^2}{2\lambda^2}\right)
\end{equation}
\end{corollary}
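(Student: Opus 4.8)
The plan is to follow the same template as the preceding circular-source corollary, specializing the kernel condition of Theorem~\ref{thrm:SECT-VCZ_Equivalence} to a Gaussian intensity profile. First I would posit the incoherent source
\begin{equation}
I_s(\bm{\rho}) = I_0 \exp\left(-\frac{|\bm{\rho}|^2}{2\sigma_\theta^2}\right),
\end{equation}
whose RMS angular width is $\sigma_\theta$ by construction, and substitute the rescaled profile $I_s(\bm{\rho}/(\lambda z))$ into the proportionality $K_S \propto \mathcal{F}\{I_s(\bm{\rho}/(\lambda z))\}(\bm{r}-\bm{r}')$ supplied by Equation~\ref{eq:SECT-VCZ_Equivalence}. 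This reduces the problem to evaluating a single two-dimensional Fourier integral, exactly as in the circular case but with the hard aperture replaced by a smooth Gaussian weight.

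The second step is to carry out that integral. Unlike the circular source, the Gaussian factorizes across the two Cartesian components and the integration runs over all of $\R^2$ with no finite cutoff, so no Bessel-function identities are required. I would complete the square in the exponent using the standard transform $\int_{\R^2} e^{-a|\bm{\rho}|^2} e^{-2\pi i \bm{\rho}\cdot\bm{k}}\, d^2\rho = (\pi/a)\, e^{-\pi^2|\bm{k}|^2/a}$, applied with $a = 1/(2\sigma_\theta^2)$ and $\bm{k} = (\bm{r}-\bm{r}')/(\lambda z)$. The outcome is a Gaussian in $|\bm{r}-\bm{r}'|$ multiplied by an amplitude prefactor proportional to $\sigma_\theta^2 I_0$. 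The final step is normalization: dividing by the value at $\bm{r}=\bm{r}'$ eliminates the prefactor and leaves a unit-valued Gaussian coherence kernel of the claimed Gaussian-Schell form, recovering $\exp\!\left(-\pi^2\sigma_\theta^2|\bm{r}-\bm{r}'|^2/(2\lambda^2)\right)$.

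I expect the integral itself to be entirely routine; the one point demanding care is the bookkeeping of the $\lambda z$ scaling factors. A direct evaluation produces an exponent of the form $-\,c\,\pi^2\sigma_\theta^2|\bm{r}-\bm{r}'|^2/(\lambda z)^2$, whereas the stated kernel carries $\lambda^2$ alone in the denominator with a coefficient of $1/2$. Matching these requires adopting the same coordinate/normalization convention already used implicitly in the circular-source corollary, where the propagation distance and a numerical factor are absorbed into the definition of the angular variable $\bm{\rho}$ (so that the argument scales as $|\bm{r}-\bm{r}'|/\lambda$ rather than $|\bm{r}-\bm{r}'|/(\lambda z)$). I would therefore make that convention explicit when reducing the amplitude, and verify that the Gaussian is preserved under the Fourier transform up to the multiplicative constant removed by normalization — which is precisely the self-reciprocity property that makes the Gaussian-Schell model so tractable and guarantees the result is itself a valid coherence kernel with $K_S(\bm{0},\lambda)=1$.
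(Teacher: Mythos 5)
Your proposal follows essentially the same route as the paper's own proof: posit the Gaussian intensity profile, evaluate the two-dimensional Fourier transform via the standard Gaussian self-reciprocity identity, and divide out the $\sigma_\theta^2 I_0/(\lambda z)^2$ prefactor by normalization at $\bm{r}=\bm{r}'$. Your explicit flag about the $(\lambda z)^2$-versus-$\lambda^2$ exponent bookkeeping is well taken --- the paper's derivation silently absorbs the same factor of $z$ and the numerical coefficient when passing from the raw transform to the stated kernel (exactly as in the circular-source corollary), so being explicit about that convention only strengthens the argument.
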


\begin{proof}
For a Gaussian source with intensity distribution:
\begin{align}
I_s(\bm{\rho}) = I_0 \exp\left(-\frac{|\bm{\rho}|^2}{2\sigma_\theta^2}\right)
\end{align}

The Fourier transform required for our kernel is:
\begin{align}
\mathcal{F}\left\{I_s\left(\frac{\bm{\rho}}{\lambda z}\right)\right\}(\bm{\Delta r}) &= \int I_0 \exp\left(-\frac{|\bm{\rho}|^2}{2\sigma_\theta^2}\right) e^{-2\pi i \bm{\rho} \cdot \bm{\Delta r}/\lambda z} \frac{d^2\rho}{(\lambda z)^2}
\end{align}

Using the Fourier transform property that the transform of a Gaussian is another Gaussian, specifically $\mathcal{F}\{e^{-\pi a x^2}\} = \frac{1}{\sqrt{a}}e^{-\pi \xi^2/a}$, and applying appropriate scaling, we obtain:
\begin{align}
\mathcal{F}\left\{I_s\left(\frac{\bm{\rho}}{\lambda z}\right)\right\}(\bm{\Delta r}) &= \frac{I_0 \cdot 2\pi\sigma_\theta^2}{(\lambda z)^2} \exp\left(-\frac{\pi^2\sigma_\theta^2|\bm{\Delta r}|^2}{2\lambda^2}\right)
\end{align}

After normalization, this gives the surface coherence kernel for a Gaussian source:
\begin{align}
K_S(\bm{r} - \bm{r}', \lambda) = \exp\left(-\frac{\pi^2\sigma_\theta^2|\bm{r} - \bm{r}'|^2}{2\lambda^2}\right)
\end{align}

This is the Gaussian form of the mutual coherence function, which leads to the widely used Gaussian-Schell model in coherence theory. The coherence length is inversely proportional to the angular size of the source, in accordance with the van Cittert-Zernike theorem.
\end{proof}

These explicit forms for common source distributions provide practical implementation guidance and demonstrate that our 
framework reproduces well-established coherence patterns. The circular source case gives rise to the familiar Airy-disk 
coherence pattern common in astronomical observations through circular apertures (see in Section~\ref{appendix:Stellar_Coherence_Transformation}), while the Gaussian source produces the 
mathematically convenient Gaussian coherence function used extensively in coherence theory.

\subsection{Two-Component Formulation}

One of the central features of this framework is the explicit separation of partial coherence modeling into two distinct physical components, each represented by its own mathematical operator. This separation provides both physical insight and computational advantages over traditional approaches that treat coherence primarily as a source-based or propagation-based phenomenon.

Figure~\ref{fig:System_Geometry_with_Curved_Surfaces} illustrates the geometric configuration of our framework, showing a 
curved reflecting surface and detection plane with incident and reflected rays. The diagram visualizes how the 
SECT$_S$ component encodes coherence at the reflecting surface, while the SECT$_P$ component governs the propagation of 
these coherence properties along the path to the detection plane.

\begin{figure}[H]
    \centering
    \includegraphics[width=0.9\textwidth]{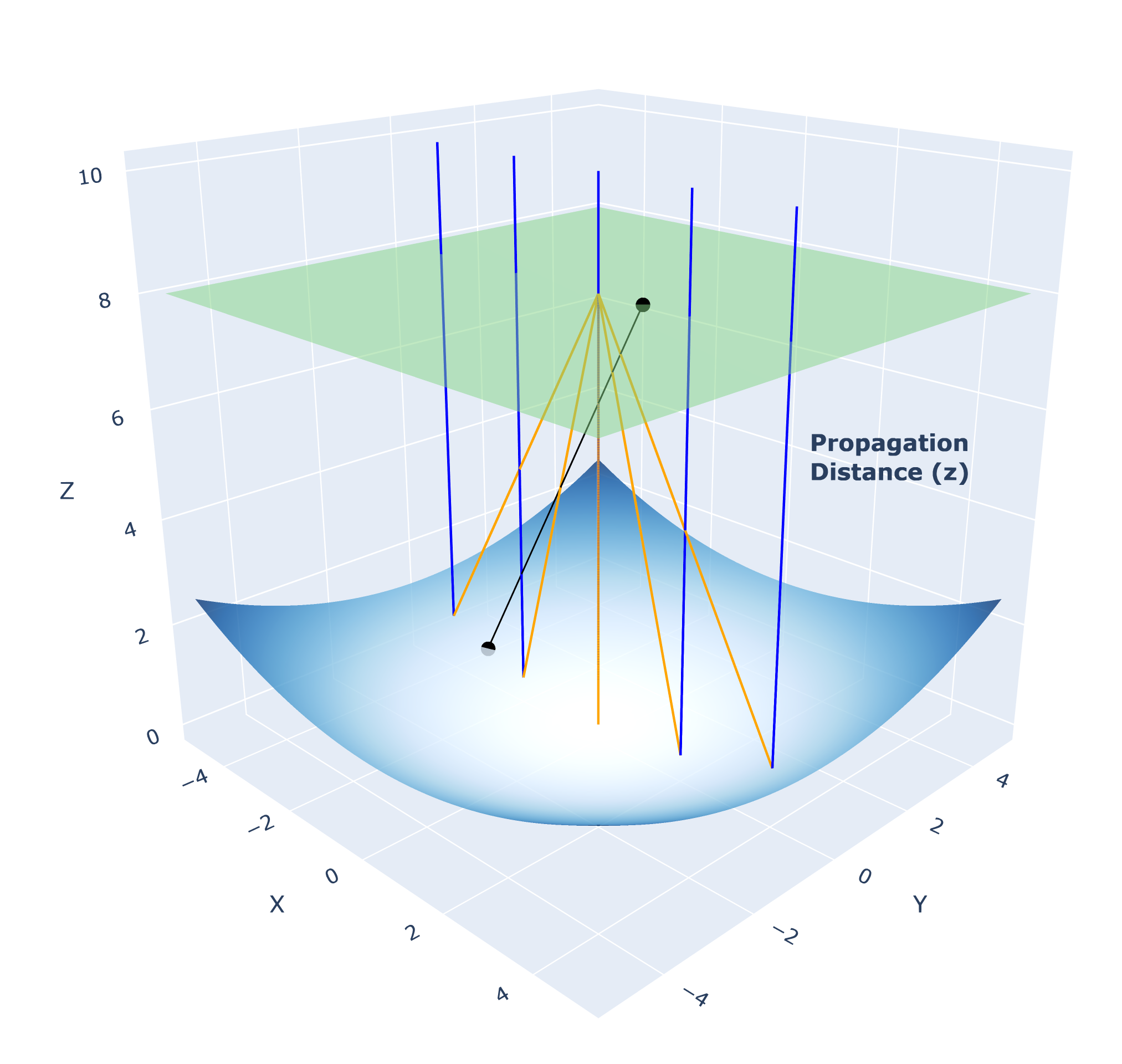}
    \caption{Geometric configuration of the SECT framework showing a curved reflecting surface and detection plane. 
The framework accounts for the actual three-dimensional coordinates of each point on both surfaces connected by black line
to illustrate the point-to-point propagation distance.}
    \label{fig:System_Geometry_with_Curved_Surfaces}
\end{figure}

\subsubsection{Component 1: Surface-Encoded Coherence Transformation (SECT$_S$)}

The first component, denoted as SECT$_S$, represents the transformation of coherence properties that occurs at the reflecting or transmitting surface. We define this component through a surface coherence operator $\mathcal{C}_S$ that acts on the incident field $U_i(\bm{r})$ immediately after the standard reflection or transmission operation:

\begin{equation}
U_s(\bm{r}) = \mathcal{C}_S(\mathcal{R}(U_i))(\bm{r}) = \int K_S(\bm{r}, \bm{r}', \lambda) \mathcal{R}(U_i)(\bm{r}') d^2r'
\end{equation}

where $\mathcal{R}$ is the reflection operator, $K_S(\bm{r}, \bm{r}', \lambda)$ is the surface coherence kernel that 
encodes the desired coherence properties, and $U_s(\bm{r})$ is the field at the surface 
after both reflection and coherence transformation.

In the case of spatially stationary kernels, the integral operator \( C_S \) corresponds to a spatial convolution, 
allowing an interpretation of the SECT$_S$ component as a spatial low-pass filter applied to the reflected field. 
This filtering perspective will be further developed in Section~\ref{sec:Relation_to_Spatial_Filtering_Theory}.

This surface coherence operator effectively transforms a fully coherent incident field into a partially coherent 
field directly at the surface, mimicking the effect of an extended source. The key physical insight is that 
we can model the partial coherence as being "imprinted" on the wavefront at the surface, rather than requiring an 
ensemble of source realizations.

\subsubsection{Component 2: Propagation Component (SECT$_P$)}

The second component, denoted as SECT$_P$, accounts for how the coherence properties established at the surface evolve during propagation to the detection plane. We define this component through a propagation operator $\mathcal{P}_z$ that acts on the field after it has been transformed by the surface coherence operator:

\begin{equation}
U_d(\bm{r}) = \mathcal{P}_z(U_s)(\bm{r}) = \int h(\bm{r}, \bm{r}', z, \lambda) U_s(\bm{r}') d^2r'
\end{equation}

where $h(\bm{r}, \bm{r}', z, \lambda)$ is the propagation kernel (typically the Fresnel or Rayleigh-Sommerfeld kernel), $z$ is the propagation distance, and $U_d(\bm{r})$ is the field at the detection plane.

The propagation component captures the evolution of coherence properties during free-space propagation, including effects such as diffraction and the natural increase in coherence length with distance (as described by the Van Cittert-Zernike theorem).

It is important to note that the Fresnel propagation kernel is derived under the paraxial approximation, which assumes 
that the dominant propagation occurs along the optical axis with small angular deviations. This approximation limits 
the validity of our framework to systems where the propagation distance $z$ is significantly larger than the characteristic 
lateral dimensions of both the surface and detection plane. For high-numerical-aperture systems or extreme off-axis 
configurations, a more general non-paraxial propagation kernel would be required.

\subsection{Generalization to Curved Surfaces and Arbitrary Detection Geometries}

While the formulation presented thus far assumes planar surfaces and detection planes oriented perpendicular to the optical axis, the framework naturally extends to more general geometries. For a curved reflecting surface described by a height function $h(\bm{r})$, the reflection operator $\mathcal{R}$ must account for the local surface normal:
\begin{equation}
\mathcal{R}(U)(\bm{r}) = R(\bm{r})e^{i\phi_s(\bm{r})}U(\bm{r})
\end{equation}
where $\phi_s(\bm{r}) = 2k \cdot h(\bm{r})$ for a reflective surface under normal incidence. More generally, for arbitrary incidence angles, the phase term must account for the optical path difference based on the local surface normal.

For propagation to an arbitrary detection surface described by coordinate $\bm{r}_d$, the propagation kernel must account for the actual point-to-point distance:
\begin{equation}
h(\bm{r}_d, \bm{r}, \lambda) = \frac{1}{i\lambda |\bm{r}_d - \bm{r}|} e^{ik|\bm{r}_d - \bm{r}|}
\end{equation}
where $|\bm{r}_d - \bm{r}|$ is the Euclidean distance between a point $\bm{r}$ on the reflecting surface and a point $\bm{r}_d$ on the detection surface. This generalized propagation kernel reduces to the Fresnel approximation when both the surface and detector are planar and the paraxial approximation holds.

The mutual coherence function at the detection surface then becomes:
\begin{equation}
\Gamma_d(\bm{r}_{d1}, \bm{r}_{d2}) = \int\int h^*(\bm{r}_{d1}, \bm{r}'_1, \lambda) h(\bm{r}_{d2}, \bm{r}'_2, \lambda) \Gamma_S(\bm{r}'_1, \bm{r}'_2) d^2r'_1 d^2r'_2
\end{equation}

While this formulation is mathematically complete, practical computation often requires additional approximations, such 
as the stationary phase approximation or numerical quadrature methods, particularly when the paraxial approximation is 
not valid.

The SECT framework relies on the sequential application of two distinct mathematical components: the surface-encoded 
coherence transformation (SECT$_S$), which modifies coherence at the interface via a kernel operator \( C_S \), and the 
propagation component (SECT$_P$), which evolves the field to the detection plane through \( P_z \). 
Each of these has been defined individually in the preceding sections. We now describe how these components 
integrate to form the complete two-step transformation from an incident field \( U_i \) to the detected field \( U_d \). 
This composition forms the basis for a unified operator framework in which partial coherence is introduced 
deterministically at the surface and propagated through standard diffraction theory. The resulting formulation 
preserves the interpretability and modularity of the two-step approach, while enabling efficient and physically accurate 
simulations for a wide range of optical configurations.

\subsubsection{Complete Two-Component Framework}

The complete framework combines these two components sequentially, providing a comprehensive description of how coherence evolves from the surface to the detection plane:

\begin{equation}
U_d(\bm{r}) = \mathcal{P}_z(\mathcal{C}_S(\mathcal{R}(U_i)))(\bm{r})
\end{equation}

The mutual coherence function at the detection plane is then given by:

\begin{equation}
\Gamma_d(\bm{r}_1, \bm{r}_2) = \langle U_d^*(\bm{r}_1) U_d(\bm{r}_2) \rangle
\end{equation}

This dual-component approach differs fundamentally from traditional methods in two important ways:

1. It explicitly separates the physical mechanisms that influence coherence (surface interactions versus propagation effects), providing clearer physical insight into the relative contributions of each.

2. It allows coherence to be modeled through deterministic operators rather than requiring ensemble averaging over multiple realizations, potentially offering significant computational advantages.

In the following subsections, we develop the mathematical properties of each operator in detail and establish their relationship to existing coherence theory.

\subsection{Surface-Encoded Coherence Operator}

The first component of our framework, SECT$_S$, is implemented through the surface-encoded coherence 
operator $\mathcal{C}_S$, defined as a linear integral operator that acts on a field $U(\bm{r})$ according to:
\begin{equation}
\mathcal{C}_S(U)(\bm{r}) = \int K_S(\bm{r}, \bm{r}', \lambda) U(\bm{r}') d^2r'
\end{equation}

where $K_S(\bm{r}, \bm{r}', \lambda)$ is the surface coherence kernel that encapsulates the statistical coherence properties to be encoded at the surface. This operator transforms a fully coherent field into a partially coherent one by introducing appropriate spatial correlations directly at the reflection interface.

\begin{theorem}[Surface-Encoded Coherence Transformation]
The SECT$_S$ component transforms a fully coherent incident field $U_i(\bm{r})$ into a field with the statistical properties of a partially coherent field at the reflecting surface according to:
\begin{equation}
U_s(\bm{r}) = \mathcal{C}_S(\mathcal{R}(U_i))(\bm{r})
\end{equation}
where $\mathcal{R}$ is the standard surface reflection operator defined as:

\begin{equation}
\mathcal{R}(U)(\bm{r}) = R(\bm{r})e^{i\phi_s(\bm{r})}U(\bm{r}) 
\end{equation}

with $R(\bm{r})$ being the reflection coefficient and $\phi_s(\bm{r})$ the phase change due to surface geometry.
\end{theorem}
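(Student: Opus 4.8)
The plan is to recognize that the assertion is a statement about second-order field statistics rather than about a single deterministic field: since $U_i = A_0$ is fully coherent, applying any deterministic linear operator would leave the complex degree of coherence identically equal to unity, so no partial coherence could arise. Genuine partial coherence emerges only once the stochastic content of the surface interaction is made explicit. Following the modeling used for Theorem~\ref{thrm:SECT-VCZ_Equivalence}, I would treat the reflected field $\mathcal{R}(U_i)$ as a zero-mean random field that is spatially delta-correlated at the interface, namely
\begin{equation}
\langle [\mathcal{R}(U_i)]^*(\bm{r}_1')\, [\mathcal{R}(U_i)](\bm{r}_2') \rangle = S(\bm{r}_1')\, \delta(\bm{r}_1' - \bm{r}_2'),
\end{equation}
which encodes the surface as an effective incoherent secondary source whose local intensity $S(\bm{r}')$ is set by $|R(\bm{r}')|^2|A_0|^2$, consistent with the Van Cittert--Zernike picture.

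With this input statistic fixed, the proof proceeds by propagating the second-order correlations through the linear integral operator $\mathcal{C}_S$. By linearity and the definition of $\mathcal{C}_S$, the mutual coherence function at the surface is
\begin{equation}
\Gamma_S(\bm{r}_1, \bm{r}_2) = \langle U_s^*(\bm{r}_1) U_s(\bm{r}_2) \rangle = \iint K_S^*(\bm{r}_1, \bm{r}_1')\, K_S(\bm{r}_2, \bm{r}_2')\, \langle [\mathcal{R}(U_i)]^*(\bm{r}_1') [\mathcal{R}(U_i)](\bm{r}_2') \rangle\, d^2r_1'\, d^2r_2',
\end{equation}
and the delta correlation collapses this to the single integral $\Gamma_S(\bm{r}_1, \bm{r}_2) = \int K_S^*(\bm{r}_1, \bm{r}')\, K_S(\bm{r}_2, \bm{r}')\, S(\bm{r}')\, d^2r'$. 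In the spatially stationary case $K_S(\bm{r}, \bm{r}') = K_S(\bm{r} - \bm{r}')$ with uniform $S(\bm{r}') = S_0$, this reduces to the autocorrelation of the kernel and therefore depends only on $\bm{r}_2 - \bm{r}_1$, which is precisely the signature of a stationary partially coherent field.

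To complete the argument I would verify the three defining properties of a physical mutual coherence function. Hermitian symmetry $\Gamma_S(\bm{r}_1, \bm{r}_2) = \Gamma_S^*(\bm{r}_2, \bm{r}_1)$ follows directly from the integral form, and non-negative definiteness is automatic because $\Gamma_S$ is expressed as an ensemble autocorrelation; equivalently, by Theorem~\ref{thrm:SECT-VCZ_Equivalence} the stationary kernel $K_S$ is the Fourier transform of the non-negative intensity $I_s$, so Bochner's theorem guarantees that it is a valid non-negative-definite coherence function. Finally, introducing the normalized complex degree of coherence $\gamma(\bm{r}_1, \bm{r}_2) = \Gamma_S(\bm{r}_1, \bm{r}_2) / \sqrt{\Gamma_S(\bm{r}_1, \bm{r}_1)\, \Gamma_S(\bm{r}_2, \bm{r}_2)}$ and applying the Cauchy--Schwarz inequality to the inner product defining $\Gamma_S$ yields $|\gamma(\bm{r}_1, \bm{r}_2)| \le 1$, with strict inequality for $\bm{r}_1 \ne \bm{r}_2$ whenever $K_S$ is not a pure phase factor. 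This establishes that the output field is genuinely partially coherent, with a finite coherence length fixed by $K_S$ and matching the Airy and Gaussian kernels obtained in the preceding corollaries.

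The main obstacle is conceptual rather than computational: reconciling the deterministic operator notation $U_s = \mathcal{C}_S(\mathcal{R}(U_i))$ with the inherently statistical notion of partial coherence. The crux is to make precise that $\mathcal{C}_S$ must be understood as acting on the second-order statistics of a stochastic reflected field rather than on a single deterministic realization; without this step the degree of coherence would remain unity and the claim would be vacuous. A secondary technical point is ensuring that the autocorrelation integral converges and that the normalization is well defined, so that $\gamma$ is genuinely bounded; this requires $K_S \in L^2$ with sufficient spatial decay, a condition satisfied by both explicit kernels derived earlier.
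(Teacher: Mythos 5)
Your proposal takes a genuinely different route from the paper, and in doing so it puts its finger on a real weakness in the paper's own argument. The paper keeps everything deterministic: it writes $\langle U_i^*(\bm{r}'_1)U_i(\bm{r}'_2)\rangle = U_i^*(\bm{r}'_1)U_i(\bm{r}'_2)$, carries the double integral $\iint K_S^*(\bm{r}_1,\bm{r}'_1)K_S(\bm{r}_2,\bm{r}'_2)\,d^2r'_1\,d^2r'_2$ through, and then asserts that for a stationary kernel "with appropriate normalization" this reduces to $|R_0|^2|A_0|^2 K_S(\bm{r}_1-\bm{r}_2)$. As you correctly observe, that step cannot follow as written: for a deterministic field the double integral factorizes into $\bigl[\int K_S^*(\bm{r}_1-\bm{u})\,d^2u\bigr]\bigl[\int K_S(\bm{r}_2-\bm{v})\,d^2v\bigr]$, a constant, and the degree of coherence stays identically unity. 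Your insistence that the stochastic content must be made explicit is therefore not a pedantic refinement but the missing ingredient; the paper only supplies something like it later, in Appendix~\ref{sup:SECT-VCZ}, where it posits fluctuations $\Delta U_s$ with the desired correlation directly at the surface rather than deriving them.

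That said, your own construction has a concrete mismatch at the end. Modeling $\mathcal{R}(U_i)$ as delta-correlated and pushing it through $\mathcal{C}_S$ gives
\begin{equation}
\Gamma_S(\bm{r}_1,\bm{r}_2)=\int K_S^*(\bm{r}_1-\bm{r}')\,K_S(\bm{r}_2-\bm{r}')\,S_0\,d^2r'=S_0\,(K_S^{*}\star K_S)(\bm{r}_1-\bm{r}_2),
\end{equation}
i.e.\ the \emph{autocorrelation} of the kernel, not $K_S(\bm{r}_1-\bm{r}_2)$ itself. Your closing claim that this "matches the Airy and Gaussian kernels obtained in the preceding corollaries" is therefore not right: the autocorrelation of a Gaussian is a Gaussian of width larger by $\sqrt{2}$, and the autocorrelation of the jinc kernel is not the jinc. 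To make your route consistent with Theorem~\ref{thrm:SECT-VCZ_Equivalence} and the corollaries, you must either redefine the operator kernel as a "square root" of the target coherence function (so that $K_S^{*}\star K_S$ equals the VCZ prediction), or state explicitly that the coherence function produced by SECT$_S$ is the kernel autocorrelation and renormalize the corollaries accordingly. You should also flag that the delta-correlation of the reflected field is an added postulate not contained in the theorem's hypotheses, since $\mathcal{R}$ as defined is a deterministic multiplication operator; stating where the randomness enters is precisely the point on which the whole theorem turns.
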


\begin{proof}
Let $U_i(\bm{r})$ be a fully coherent incident field. The reflected field after surface interaction is:
\begin{equation}
\mathcal{R}(U_i)(\bm{r}) = R(\bm{r}) e^{i\phi_s(\bm{r})} U_i(\bm{r})
\end{equation}

Applying the surface coherence operator:
\begin{align}
U_s(\bm{r}) &= \mathcal{C}_S(\mathcal{R}(U_i))(\bm{r}) \\
&= \int K_S(\bm{r}, \bm{r}', \lambda) R(\bm{r}') e^{i\phi_s(\bm{r}')} U_i(\bm{r}') d^2r' \notag
\end{align}

To verify that this produces the expected statistical properties of a partially coherent field, we examine the mutual coherence function at the surface:
\begin{align}
\Gamma_S(\bm{r}_1, \bm{r}_2) &= \langle U_s^*(\bm{r}_1) U_s(\bm{r}_2) \rangle \\
&= \langle \left(\int K_S^*(\bm{r}_1, \bm{r}'_1, \lambda) R^*(\bm{r}'_1) e^{-i\phi_s(\bm{r}'_1)} U_i^*(\bm{r}'_1) d^2r'_1\right) \notag \\ 
& \quad \times \left(\int K_S(\bm{r}_2, \bm{r}'_2, \lambda) R(\bm{r}'_2) e^{i\phi_s(\bm{r}'_2)} U_i(\bm{r}'_2) d^2r'_2\right) \rangle \notag
\end{align}

For a fully coherent incident field, $\langle U_i^*(\bm{r}'_1) U_i(\bm{r}'_2) \rangle = U_i^*(\bm{r}'_1) U_i(\bm{r}'_2)$. 
Assuming a uniform plane wave illumination $U_i(\bm{r}) = A_0$ and a uniform reflection 
coefficient $R(\bm{r}) = R_0$ for simplicity:
\begin{align}
\Gamma_S(\bm{r}_1, \bm{r}_2) &= |R_0|^2 |A_0|^2 \int\int K_S^*(\bm{r}_1, \bm{r}'_1, \lambda) K_S(\bm{r}_2, \bm{r}'_2, \lambda) \\
&e^{i[\phi_s(\bm{r}'_2) - \phi_s(\bm{r}'_1)]} d^2r'_1 d^2r'_2 \notag
\end{align}

For a flat surface\footnote{The large-scale \emph{figure} of a curved mirror contributes a deterministic
phase that can be subtracted analytically; after this removal the residual
roughness phase $\phi_{\text{rough}}$ is statistically homogeneous, so the
kernel remains spatially stationary.  See \cite{moriya2025rdf} for an explicit
separation of figure and roughness.}, where $\phi_s(\bm r)=\phi_0$ is constant,
and assuming a spatially stationary kernel
$K_S(\bm r,\bm r';\lambda)=K_S(\bm r-\bm r';\lambda)$ with appropriate
normalization, this reduces to:
\begin{align}
\Gamma_S(\bm r_1,\bm r_2)
  &= |R_0|^2 |A_0|^2 K_S(\bm r_1-\bm r_2;\lambda)
  \label{eq:GammaSFlat}
\end{align}

This is exactly the form expected for a partially coherent field at the surface with coherence 
function $\gamma_S(\bm{r}_1, \bm{r}_2) = K_S(\bm{r}_1 - \bm{r}_2, \lambda)$, thus demonstrating that the SECT$_S$ 
component successfully encodes the desired coherence properties at the surface.
\end{proof}

The power of the SECT$_S$ component lies in its ability to transform a deterministic coherent field into a field with the statistical properties of a partially coherent field through a deterministic operator. This avoids the need for ensemble averaging over multiple field realizations, offering potential computational advantages. Furthermore, by encoding coherence properties at the surface, we establish a clear physical interpretation of how surface properties influence the coherence structure of the reflected field.

\subsection{Propagation Operator} \label{sec:Propagation_Operator}

The second component of our framework, SECT$_P$, accounts for the propagation of the partially coherent field from 
the reflecting surface to the detection plane. We define this component through a propagation operator $\mathcal{P}_z$ 
acting as:
\begin{equation}
\mathcal{P}_z(U)(\bm{r}) = \int h(\bm{r}, \bm{r}', z, \lambda) U(\bm{r}') d^2r'
\end{equation}
where $h(\bm{r}, \bm{r}', z, \lambda)$ is the appropriate propagation kernel that governs how the field evolves during 
propagation over distance $z$. 

For paraxial propagation, the Fresnel kernel is used:
\begin{equation}
h_{\text{paraxial}}(\bm{r}, \bm{r}', z, \lambda) = \frac{e^{ikz}}{i\lambda z} e^{i\frac{k}{2z}|\bm{r} - \bm{r}'|^2}
\end{equation}
where $k = 2\pi/\lambda$ is the wavenumber. It is important to note that the Fresnel propagation kernel is derived under the paraxial approximation, which assumes that the dominant propagation occurs along the optical axis with small angular deviations. This approximation limits the validity of our framework to systems where the propagation distance $z$ is significantly larger than the characteristic lateral dimensions of both the surface and detection plane:
\begin{equation}
\max(L_{\text{surface}}, L_{\text{detector}}) \ll z
\end{equation}
where $L_{\text{surface}}$ and $L_{\text{detector}}$ represent the characteristic lateral dimensions of the surface and 
detector, respectively.

Figure~\ref{fig:Exact_vs_Paraxial_Path_Calculation} illustrates the difference between exact path calculation and the paraxial approximation, showing that for systems where the lateral dimensions are much smaller than the propagation distance, the path difference becomes negligible. However, for large angles or curved surfaces, the exact calculation becomes necessary to accurately model coherence evolution.

\begin{figure}[H]
    \centering
    \includegraphics[width=0.9\textwidth]{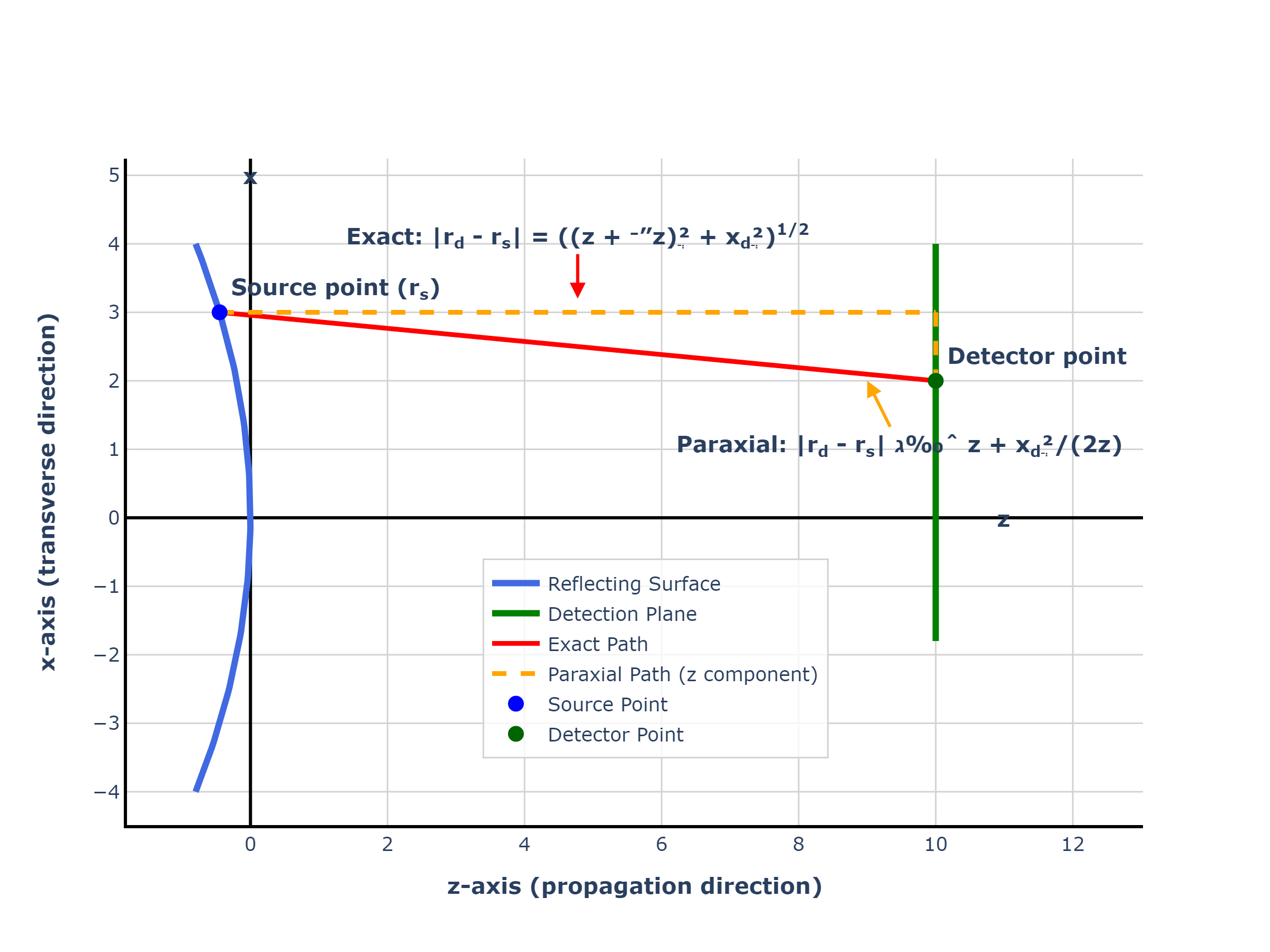}
    \caption{Comparison between exact path calculation (required for curved surfaces and large angles) and the paraxial 
approximation (valid when lateral dimensions are much smaller than propagation distance).}
    \label{fig:Exact_vs_Paraxial_Path_Calculation}
\end{figure}

More generally, for arbitrary geometries including curved surfaces and non-planar detectors, the exact propagation kernel is:
\begin{equation}
h_{\text{exact}}(\bm{r}, \bm{r}', \lambda) = \frac{1}{i\lambda |\bm{r} - \bm{r}'|} e^{ik|\bm{r} - \bm{r}'|}
\end{equation}
where $|\bm{r} - \bm{r}'|$ is the exact Euclidean distance between points. The paraxial approximation is obtained when $|\bm{r} - \bm{r}'| \approx z + \frac{|\bm{r}_{\perp} - \bm{r}'_{\perp}|^2}{2z}$, where $\bm{r}_{\perp}$ and $\bm{r}'_{\perp}$ represent the transverse coordinates.

For a curved reflecting surface described by a height function $h_s(\bm{r})$ and a detection surface that may 
not be perpendicular to the optical axis, the propagation must account for the actual three-dimensional coordinates 
of each point. In this case, the propagation integral becomes:
\begin{equation}
\mathcal{P}(U)(\bm{r}_d) = \int h_{\text{exact}}(\bm{r}_d, \bm{r}_s, \lambda) U(\bm{r}_s) dA_s
\end{equation}
where $\bm{r}_d$ represents coordinates on the detection surface, $\bm{r}_s$ represents coordinates on the reflecting 
surface, and $dA_s$ is the differential area element which may vary across a curved surface.

The field at the detection plane is then given by:
\begin{equation}
U_d(\bm{r}) = \mathcal{P}_z(U_s)(\bm{r}) = \mathcal{P}_z(\mathcal{C}_S(\mathcal{R}(U_i)))(\bm{r})
\end{equation}

This sequential application of the reflection operator $\mathcal{R}$, surface coherence operator $\mathcal{C}_S$ (SECT$_S$), 
and propagation operator $\mathcal{P}_z$ (SECT$_P$) provides a complete description of how the field evolves from incidence 
to detection.

The mutual coherence function at the detection plane is given by:
\begin{equation}
\Gamma_d(\bm{r}_1, \bm{r}_2) = \langle U_d^*(\bm{r}_1) U_d(\bm{r}_2) \rangle
\end{equation}

Unlike the SECT$_S$ component, which introduces coherence modifications, the SECT$_P$ component primarily evolves the 
existing coherence structure according to well-established diffraction principles. For a partially coherent 
field with mutual coherence function $\Gamma_S(\bm{r}_1, \bm{r}_2)$ at the surface, the mutual coherence function 
at the detection plane is:
\begin{equation}
\Gamma_d(\bm{r}_1, \bm{r}_2) = \int\int h^*(\bm{r}_1, \bm{r}'_1, z, \lambda) h(\bm{r}_2, \bm{r}'_2, z, \lambda) \Gamma_S(\bm{r}'_1, \bm{r}'_2) d^2r'_1 d^2r'_2
\end{equation}

For curved surfaces and arbitrary detection geometries, this expression generalizes to:
\begin{equation}
\Gamma_d(\bm{r}_{d1}, \bm{r}_{d2}) = \int\int h_{\text{exact}}^*(\bm{r}_{d1}, \bm{r}'_{s1}, \lambda) h_{\text{exact}}(\bm{r}_{d2}, \bm{r}'_{s2}, \lambda) \Gamma_S(\bm{r}'_{s1}, \bm{r}'_{s2}) dA_{s1} dA_{s2}
\end{equation}
where the integration is performed over the actual surface areas.

This propagation of the mutual coherence function is consistent with the standard treatment in partial coherence theory, 
ensuring that our SECT$_P$ component correctly captures the evolution of coherence during propagation.

A key insight of our framework is that while both SECT$_S$ and SECT$_P$ influence the final coherence properties, 
they do so through distinctly different physical mechanisms: SECT$_S$ establishes the initial coherence structure 
through surface interactions, while SECT$_P$ evolves this structure through diffraction and free-space propagation. 
This separation allows for more targeted modeling and optimization in different optical scenarios, as we will explore 
in Section~\ref{sec:relative_importance}. The flexibility of our framework accommodates both paraxial systems (where 
computational efficiency can 
be maximized) and non-paraxial systems with curved surfaces (where accuracy is paramount), though different mathematical 
and computational techniques may be required for each regime.

\subsection{Combined Formulation}

Having established the individual components of our framework—the surface-encoded coherence operator (SECT$_S$) and the propagation operator (SECT$_P$)—we now develop a unified mathematical formulation that combines these components while preserving their distinct physical interpretations. This combined approach provides computational advantages while maintaining the conceptual clarity of the dual-component structure.

We can combine the surface coherence operator $\mathcal{C}_S$ (SECT$_S$) and propagation operator $\mathcal{P}_z$ (SECT$_P$) into a total coherence operator $\mathcal{C}_{\text{tot}}$ that directly maps from the incident field to the field at the detection plane:

\begin{equation}
U_d(\bm{r}) = \mathcal{C}_{\text{tot}}(\mathcal{R}(U_i))(\bm{r})
\end{equation}

where the total coherence operator is defined as:

\begin{equation}
\mathcal{C}_{\text{tot}}(U)(\bm{r}) = \int K_{\text{tot}}(\bm{r}, \bm{r}', z, \lambda) U(\bm{r}') d^2r'
\end{equation}

The total coherence kernel $K_{\text{tot}}$ represents the combined effect of the SECT$_S$ and SECT$_P$ components and is 
related to 
the surface coherence kernel $K_S$ and the propagation kernel $h$ through:

\begin{equation}
K_{\text{tot}}(\bm{r}, \bm{r}', z, \lambda) = \int h(\bm{r}, \bm{r}'', z, \lambda) K_S(\bm{r}'', \bm{r}', \lambda) d^2r''
\end{equation}

This integration effectively combines the two sequential operations (surface coherence transformation followed by propagation) into a single operation. For computational purposes, this combined formulation can offer efficiency advantages, particularly when multiple field points need to be evaluated at the detection plane.

The mutual coherence function at the detection plane can then be expressed as:

\begin{equation}
\Gamma_d(\bm{r}_1, \bm{r}_2) = \langle U_d^*(\bm{r}_1) U_d(\bm{r}_2) \rangle = \langle [\mathcal{C}_{\text{tot}}(\mathcal{R}(U_i))(\bm{r}_1)]^* \mathcal{C}_{\text{tot}}(\mathcal{R}(U_i))(\bm{r}_2) \rangle
\end{equation}

For a fully coherent incident field $U_i(\bm{r})$ and a deterministic reflection operator $\mathcal{R}$, this simplifies to:

\begin{align}
\Gamma_d(\bm{r}_1, \bm{r}_2) = \int\int K_{\text{tot}}^*(\bm{r}_1, \bm{r}'_1, z, \lambda) K_{\text{tot}}(\bm{r}_2, \bm{r}'_2, z, \lambda) \\
&U_i^*(\bm{r}'_1) U_i(\bm{r}'_2) d^2r'_1 d^2r'_2 \notag
\end{align}

For the common case of uniform plane wave illumination where $U_i(\bm{r}) = A_0$, this further simplifies to:

\begin{equation}
\Gamma_d(\bm{r}_1, \bm{r}_2) = |A_0|^2 \int\int K_{\text{tot}}^*(\bm{r}_1, \bm{r}'_1, z, \lambda) K_{\text{tot}}(\bm{r}_2, \bm{r}'_2, z, \lambda) d^2r'_1 d^2r'_2
\end{equation}

For spatially stationary kernels where $K_S(\bm{r}, \bm{r}', \lambda) = K_S(\bm{r} - \bm{r}', \lambda)$ and appropriate propagation conditions, this reduces to:

\begin{equation}
\Gamma_d(\bm{r}_1, \bm{r}_2) = |A_0|^2 K_{\text{tot}}(\bm{r}_1 - \bm{r}_2, z, \lambda)
\end{equation}

where $K_{\text{tot}}(\bm{r}_1 - \bm{r}_2, z, \lambda)$ is a function that depends only on the separation between the observation points, the propagation distance, and the wavelength.

It is important to emphasize that while this combined formulation provides mathematical and computational convenience, it does not diminish the conceptual value of separating the SECT$_S$ and SECT$_P$ components. The physical insights gained from understanding the distinct roles of surface effects and propagation effects remain a key advantage of our dual-component framework. The combined formulation simply provides an alternative mathematical pathway for implementation that may be advantageous in certain computational scenarios.

Furthermore, the combined formulation allows us to establish direct connections with existing coherence theory, particularly the Van Cittert-Zernike theorem, as demonstrated in Section 2.2. By choosing the appropriate form for the 
surface coherence kernel $K_S$, we can ensure that the total coherence operator $\mathcal{C}_{\text{tot}}$ produces 
results consistent with established coherence theory while maintaining the conceptual and practical advantages of our 
dual-component approach.

In the following sections, we further develop the properties of the combined operator and analyze the relative contributions of the SECT$_S$ and SECT$_P$ components across different optical scenarios, providing practical guidance for applications.

\subsection{Relation to Spatial Filtering Theory} \label{sec:Relation_to_Spatial_Filtering_Theory}

Our framework can be interpreted through the lens of spatial filtering theory, providing additional insight into its operation. For spatially stationary kernels, both the surface coherence operator and the propagation operator act as spatial convolutions:

\begin{equation}
\mathcal{C}_S(U)(\bm{r}) = (K_S * U)(\bm{r})
\end{equation}

\begin{equation}
\mathcal{P}_z(U)(\bm{r}) = (h_z * U)(\bm{r})
\end{equation}

In the spatial frequency domain, these operations correspond to multiplication by the respective Fourier transforms:

\begin{equation}
\mathcal{F}\{\mathcal{C}_S(U)(\bm{r})\}(\bm{\nu}) = \mathcal{F}\{K_S(\bm{r}, \lambda)\}(\bm{\nu}) \cdot \mathcal{F}\{U(\bm{r})\}(\bm{\nu})
\end{equation}

\begin{equation}
\mathcal{F}\{\mathcal{P}_z(U)(\bm{r})\}(\bm{\nu}) = \mathcal{F}\{h(\bm{r}, z, \lambda)\}(\bm{\nu}) \cdot \mathcal{F}\{U(\bm{r})\}(\bm{\nu})
\end{equation}

For the combined operation, we have:

\begin{equation}
\mathcal{F}\{\mathcal{C}_{\text{tot}}(U)(\bm{r})\}(\bm{\nu}) = \mathcal{F}\{K_{\text{tot}}(\bm{r}, z, \lambda)\}(\bm{\nu}) \cdot \mathcal{F}\{U(\bm{r})\}(\bm{\nu})
\end{equation}

where:

\begin{equation}
\mathcal{F}\{K_{\text{tot}}(\bm{r}, z, \lambda)\}(\bm{\nu}) = \mathcal{F}\{h(\bm{r}, z, \lambda)\}(\bm{\nu}) \cdot \mathcal{F}\{K_S(\bm{r}, \lambda)\}(\bm{\nu})
\end{equation}

This perspective reveals how each component modifies the spatial frequency content of the field, with the surface coherence typically attenuating high spatial frequencies (reducing fine spatial structures) and the propagation introducing phase shifts and further filtering based on the propagation distance.

\subsection{Coherence Kernel Construction}

Both the surface coherence kernel and the total coherence kernel must satisfy several physical constraints to generate physically valid partially coherent fields:

\begin{definition}[Admissible Coherence Kernel]
A function $K(\bm{r}, \bm{r}', \lambda)$ is an admissible coherence kernel if:
\begin{enumerate}
\item Hermiticity: $K(\bm{r}, \bm{r}', \lambda) = K^*(\bm{r}', \bm{r}, \lambda)$
\item Positive definiteness: For any function $\psi(\bm{r})$, the following inequality holds:
\begin{equation}
\int \int \psi^*(\bm{r}) K(\bm{r}, \bm{r}', \lambda) \psi(\bm{r}') d^2r d^2r' \geq 0
\end{equation}
\item Normalization: $K(\bm{r}, \bm{r}, \lambda) = 1$ for all $\bm{r}$
\item Spectral scaling: $K(\bm{r}, \bm{r}', \lambda) = f(|\bm{r} - \bm{r}'|/\rho_c(\lambda))$ where $\rho_c(\lambda)$ is the coherence length
\end{enumerate}
\end{definition}

For a Gaussian source, a typical choice for the surface coherence kernel is:
\begin{equation}
K_S(\bm{r}, \bm{r}', \lambda) = \exp\left(-\frac{|\bm{r} - \bm{r}'|^2}{2\rho_c^2(\lambda)}\right)
\end{equation}

where $\rho_c(\lambda) = \lambda/\theta$ is the coherence length, with $\theta$ representing the angular extent of the source.

The propagation kernel, based on the Fresnel approximation, has a quadratic phase dependence:
\begin{equation}
h(\bm{r} - \bm{r}', z, \lambda) = \frac{e^{ikz}}{i\lambda z} e^{i\frac{k}{2z}|\bm{r} - \bm{r}'|^2}
\end{equation}

The combined kernel inherits properties from both components, with the propagation distance $z$ creating a scale-dependent evolution of the coherence properties established at the surface.

\subsection{Spectral and Temporal Generalization}

For polychromatic light, we extend our framework to incorporate spectral effects:
\begin{equation}
\mathcal{C}_{\text{poly}} = \int d\lambda \, W(\lambda) \mathcal{C}_{\text{tot}}(\lambda, z)
\end{equation}

where $W(\lambda)$ is the spectral weight function and $\mathcal{C}_{\text{tot}}(\lambda, z)$ is the total coherence operator for wavelength $\lambda$ and propagation distance $z$.

The framework extends naturally to temporal coherence through:
\begin{equation}
\mathcal{C}_{\text{total}} = \mathcal{C}_{\text{spatial}} \otimes \mathcal{C}_{\text{temporal}}
\end{equation}

This comprehensive framework allows modeling of complex partial coherence effects including both spatial and temporal aspects.

\section{Relative Importance of Surface and Propagation Effects} \label{sec:relative_importance}

A key advantage of our dual-component framework is that it explicitly separates the contributions of surface interactions (SECT$_S$) and propagation effects (SECT$_P$) to the overall coherence transformation. This separation allows us to analyze their relative importance in different optical scenarios, providing valuable insight for practical applications and computational optimization.

\subsection{Theoretical Analysis}

The relative importance of the surface component (SECT$_S$) versus the propagation component (SECT$_P$) depends on several key factors that determine how coherence evolves from the surface to the detection plane:

\begin{enumerate}
\item \textbf{Source coherence characteristics}: The initial coherence properties of the light incident on the surface establish a baseline that can be modified by both components.

\item \textbf{Surface properties}: The microscopic and macroscopic properties of the reflecting surface determine the magnitude of coherence modification at the surface through the SECT$_S$ component.

\item \textbf{Propagation distance}: The distance from the surface to the detection plane determines the extent of coherence evolution during propagation through the SECT$_P$ component.

\item \textbf{Wavelength and characteristic lengths}: The relationship between wavelength, coherence length, surface feature size, and propagation distance determine the scaling of both SECT$_S$ and SECT$_P$ effects.
\end{enumerate}

To quantify the relative importance of these components, we define the following parameters:
\begin{itemize}
\item $\rho_i$ - Initial coherence length of the incident field
\item $\rho_s$ - Coherence length after surface interaction (SECT$_S$)
\item $\rho_d$ - Coherence length at the detection plane (after SECT$_P$)
\item $\sigma_s$ - Characteristic scale of surface features
\item $z$ - Propagation distance
\item $z_R = \rho_s^2/\lambda$ - Rayleigh range for the coherence structure
\end{itemize}

We can define a dimensionless parameter $\eta$ that quantifies the relative contribution of the SECT$_S$ component to the total coherence transformation:

\begin{equation}
\eta = \frac{|\rho_i - \rho_s|}{|\rho_i - \rho_d|}
\end{equation}

This parameter ranges from 0 (surface effects negligible) to 1 (propagation effects negligible). Values near 0.5 indicate comparable contributions from both components.

The surface component (SECT$_S$) typically dominates when:
\begin{itemize}
\item $\sigma_s \approx \lambda$ (surface features comparable to wavelength)
\item $z \ll z_R$ (near-field regime)
\item $\rho_s \ll \rho_i$ (surface significantly reduces coherence)
\end{itemize}

Under these conditions, $\eta$ approaches 1, indicating that the coherence transformation occurs primarily at the surface, with minimal evolution during subsequent propagation.

The propagation component (SECT$_P$) typically dominates when:
\begin{itemize}
\item $\sigma_s \gg \lambda$ (smooth surface)
\item $z \gg z_R$ (far-field regime)
\item $\rho_d \ll \rho_s$ (significant coherence evolution during propagation)
\end{itemize}

In these cases, $\eta$ approaches 0, indicating that the surface interaction preserves most of the incident coherence properties, with the primary coherence transformation occurring during propagation.

We can further characterize these regimes through scaling laws. For the SECT$_S$ component, the coherence modification scales as:

\begin{equation}
\frac{\rho_s}{\rho_i} \approx \begin{cases}
1 & \text{if } \sigma_s \gg \lambda \text{ (smooth surface)} \\
\frac{\sigma_s}{\lambda} & \text{if } \sigma_s \approx \lambda \text{ (moderately rough surface)} \\
\frac{\sigma_s^2}{\lambda^2} & \text{if } \sigma_s \ll \lambda \text{ (very rough surface)}
\end{cases}
\end{equation}

For the SECT$_P$ component, the coherence evolution scales approximately as:

\begin{equation}
\frac{\rho_d}{\rho_s} \approx \begin{cases}
1 & \text{if } z \ll z_R \text{ (near field)} \\
\sqrt{\frac{z}{z_R}} & \text{if } z \approx z_R \text{ (transition region)} \\
\frac{z}{z_R} & \text{if } z \gg z_R \text{ (far field)}
\end{cases}
\end{equation}

These scaling relationships provide a quantitative basis for determining which component will dominate in a given optical 
scenario.

Figure~\ref{fig:Regions_of_Validity_for_Paraxial_Approximation} illustrates the validity regions for the paraxial approximation 
across different optical systems, showing that the ratio of lateral dimensions to propagation distance $L/z$ must typically 
remain below 0.2 to maintain wavefront errors below \(\lambda/20\). Note that curved surfaces further restrict this 
validity region, 
requiring more precise treatment of the SECT$_P$ component through the exact propagation kernel in such cases.

\begin{figure}[H]
    \centering
    \includegraphics[width=0.9\textwidth]{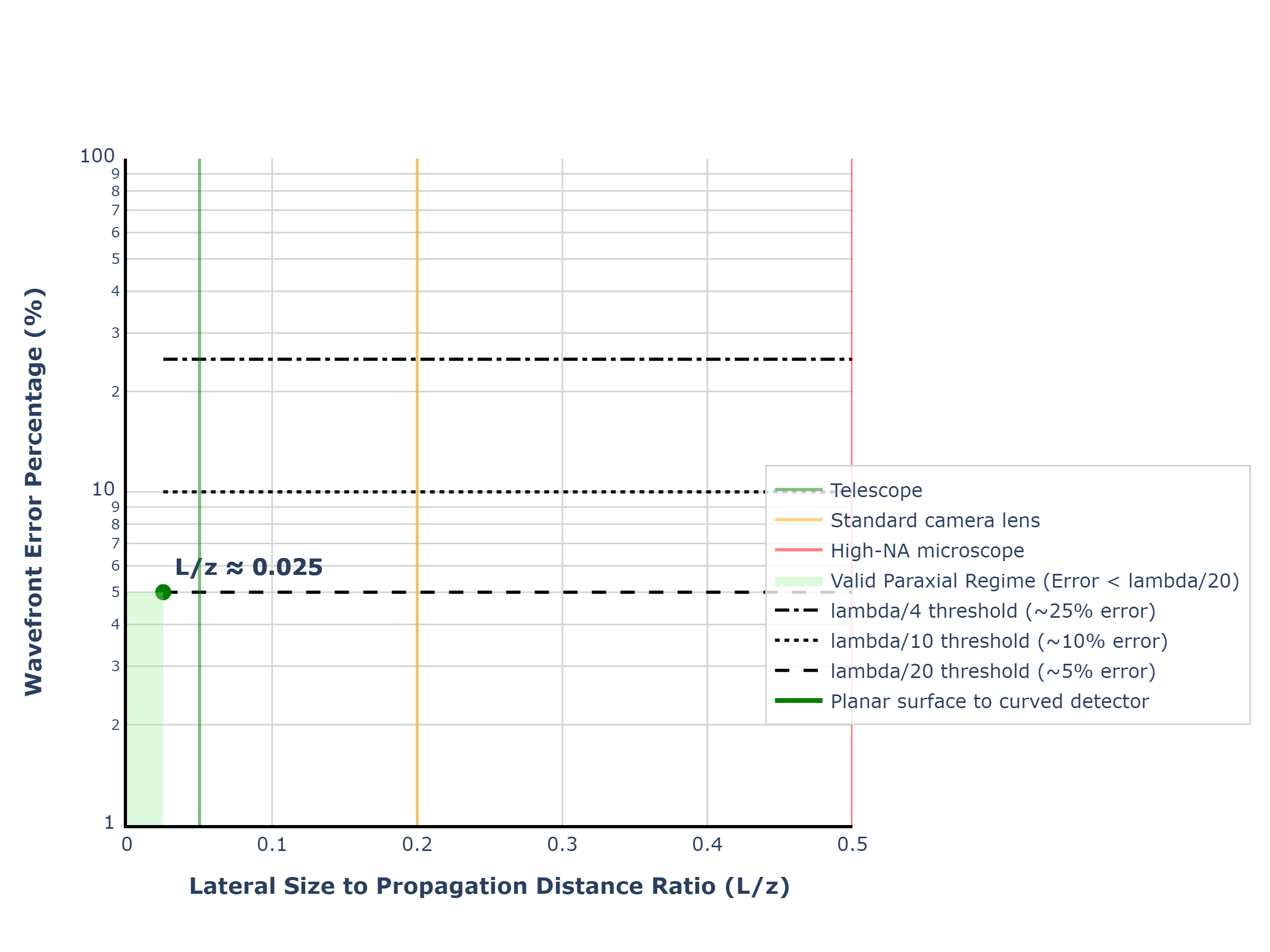}
    \caption{Validity regions for the paraxial approximation. The approximation is valid when the ratio of lateral 
	dimensions \(L\) to propagation distance \(z\) is small, typically requiring $L/z < 0.2$
for phase errors below \(\lambda/20\). Curved surfaces further restrict the valid region.}
    \label{fig:Regions_of_Validity_for_Paraxial_Approximation}
\end{figure}

\subsection{Comparative Analysis Across Optical Scenarios}

Table~\ref{table:relativeImportance} provides a comparative analysis of the relative importance of surface effects (SECT$_S$) and propagation effects (SECT$_P$) in various optical scenarios, offering guidance for practical applications.

\begin{table}[H]
\centering
\caption{Relative importance of surface effects (SECT$_S$) and propagation effects (SECT$_P$) in various optical scenarios. The dominant component indicates which effect should receive greater attention in modeling.}
\label{table:relativeImportance}
\resizebox{\textwidth}{!}{%
\begin{tabular}{|l|P{3.2cm}|P{3cm}|P{3cm}|P{3cm}|}
\hline
\textbf{Scenario} & \textbf{Surface Effects (SECT$_S$)} & \textbf{Propagation Effects (SECT$_P$)} & \textbf{Dominant Component} \\
\hline
Stellar observation through smooth telescope optics & Minor - Surface introduces minimal coherence modification & Major - Source angular size and propagation distance determine coherence & SECT$_P$ \\
\hline
Laser reflection from rough surface & Major - Surface roughness introduces significant coherence reduction & Minor - Coherence pattern mostly determined at surface & SECT$_S$ \\
\hline
Interferometric system with multiple surfaces & Major - Each surface modifies coherence & Moderate - Propagation between surfaces affects fringe visibility & Both \\
\hline
Near-field measurements (z $\ll$ z$_R$) & Major - Surface effects dominate at short distances & Minor - Insufficient propagation for significant coherence evolution & SECT$_S$ \\
\hline
Far-field measurements (z $\gg$ z$_R$) & Moderate - Initial condition for propagation & Major - Extensive coherence evolution during propagation & SECT$_P$ \\
\hline
Spatially varying surfaces (e.g., diffraction gratings) & Major - Surface introduces complex spatial coherence structure & Moderate - Propagation redistributes but preserves coherence characteristics & SECT$_S$ \\
\hline
\end{tabular}
}
\end{table}

For each scenario in Table~\ref{table:relativeImportance}, we can estimate the $\eta$ parameter:

\begin{itemize}
\item Stellar observation: $\eta \approx 0.1$ (SECT$_P$ dominant)
\item Laser reflection from rough surface: $\eta \approx 0.9$ (SECT$_S$ dominant)
\item Interferometric system: $\eta \approx 0.4-0.6$ (both components significant)
\item Near-field measurements: $\eta \approx 0.8$ (SECT$_S$ dominant)
\item Far-field measurements: $\eta \approx 0.2$ (SECT$_P$ dominant)
\item Spatially varying surfaces: $\eta \approx 0.7$ (SECT$_S$ dominant)
\end{itemize}

These estimates provide a quantitative basis for determining which component of our framework requires more detailed 
modeling in different applications.

Figure~\ref{fig:SECT_Component_Application_to_Curved_System} illustrates the application of our dual-component SECT 
framework to a curved optical system, showing how coherence properties are encoded at the reflective surface by 
the SECT$_S$ component and subsequently evolved during propagation to the detector by the SECT$_P$ component. 
This visualization demonstrates the complementary nature of the two components, with coherence length increasing from 
approximately 1-2 units at the surface to 2.5-5.5 units at the detection plane.

\begin{figure}[H]
    \centering
	\includegraphics[trim=1 1 1 1, clip, width=0.9\linewidth]{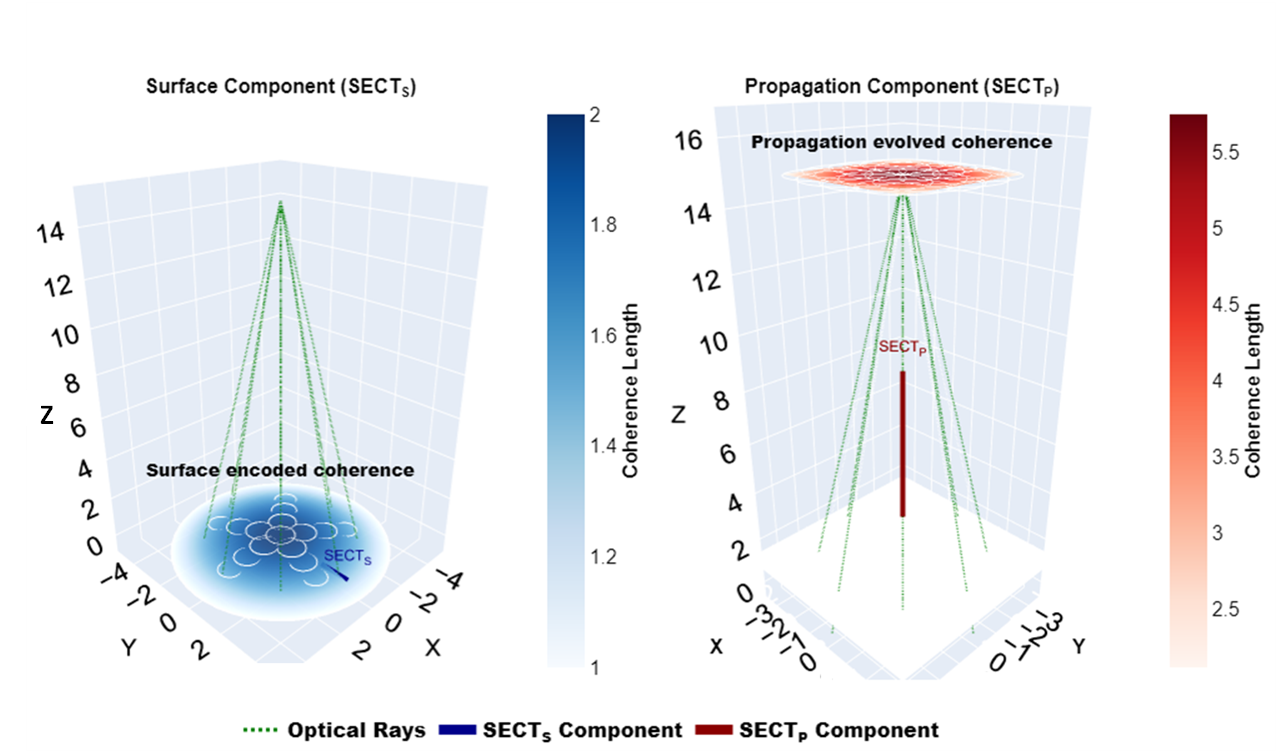}
    \caption{Application of the dual-component SECT framework to a curved optical system. The SECT$_S$
component encodes coherence properties at the surface, while the SECT$_P$
component evolves these properties during propagation to the detector.}
    \label{fig:SECT_Component_Application_to_Curved_System}
\end{figure}

It is worthwhile to note that for a curved primary mirror with radius of curvature $R$ observing a 
distant star, the surface curvature introduces spatially varying path differences across the aperture. 
In this scenario, the SECT$_S$ component must account for the curvature-induced 
phase $\phi_s(\bm{r}) = 2k \cdot |\bm{r}|^2/(2R)$, while the SECT$_P$ component must use the generalized propagation 
kernel that accounts for the actual distance from each mirror point to each detection point. For large 
focal-ratio systems ($f/\# > 5$), the paraxial approximation remains valid, and $\eta \approx 0.3$, 
indicating propagation effects dominate. However, for fast optical systems ($f/\# < 2$), the paraxial 
approximation breaks down, and both components require non-paraxial treatment.

\subsection{Practical Guidelines for Applications}

Based on our theoretical analysis and the comparative assessment across scenarios, we offer the following practical guidelines for applying the SECT framework:

\begin{enumerate}
\item \textbf{For near-field applications}: Focus on accurate modeling of the surface coherence kernel in the SECT$_S$ 
component, as the surface effects dominate. The SECT$_P$ component can often be simplified or linearized without 
significant loss of accuracy. This applies to applications such as near-field microscopy, surface metrology, and 
short-distance laser scanning.

\item \textbf{For far-field applications}: While the SECT$_S$ component provides the initial conditions, the SECT$_P$ 
component requires careful attention as it dominates the final coherence properties. This is particularly relevant 
for astronomical observations, long-distance imaging, and remote sensing applications.
A detailed application of this framework to astronomical imaging with segmented mirror telescopes is provided in 
Appendix~\ref{appendix:Stellar_Coherence_Transformation}.
.

\item \textbf{For rough surface interactions}: The SECT$_S$ component is critical and should be modeled with high 
fidelity, potentially using measured or empirically derived coherence kernels rather than analytical approximations. 
Applications include laser speckle imaging, optical coherence tomography of rough tissues, and scattering-based material 
characterization.

\item \textbf{For multi-surface systems}: Both SECT$_S$ and SECT$_P$ components are important, and their sequential 
application must be carefully tracked through the system. The interplay between surface effects and propagation 
effects can lead to complex coherence evolution that requires the full dual-component framework. 
This applies to complex optical systems such as multi-element imaging systems, interferometers, and 
cascaded diffractive optics.

\item \textbf{For wavelength-dependent systems}: The scaling of both SECT$_S$ and SECT$_P$ components with wavelength 
must be accounted for, especially in broadband applications where the relative importance can vary across the spectrum. 
This is crucial for spectroscopic systems, chromatic aberration analysis, and polychromatic imaging.
\end{enumerate}

These guidelines may help practitioners identify which component requires more detailed modeling based on their specific application, enabling more efficient and accurate simulation of partial coherence effects. The separation of surface and propagation effects in our dual-component framework provides not only conceptual clarity but also practical computational advantages by allowing selective optimization of each component.

For computational implementation, this separation allows:

\begin{itemize}
\item Parallel processing of different surface interactions in multi-surface systems
\item Selective refinement of the dominant component for a given scenario
\item Pre-computation of surface coherence kernels for repeated use in parametric studies
\item Adaptive mesh refinement based on the relative importance of each component
\end{itemize}

The ability to tailor computational resources based on the relative importance of each component represents a significant advantage of our dual-component approach over traditional monolithic coherence models.

\section{Mathematical Properties} \label{sec:mathematical_properties}

\subsection{Convergence and Equivalence}

Both components of our framework, as well as their combination, satisfy important convergence properties that ensure their physical validity.

\begin{theorem}[Convergence to Ensemble Average] \label{thm:convergence}
Let $A$ denote the effective area over which the incident field is defined, and let $\rho_c$ be the transverse coherence radius of the source. Define the number of spatial coherence cells as $N_c \approx A / (\pi \rho_c^2)$. In the limit where $\rho_c \to 0$ and $N_c \to \infty$ such that the total optical power remains finite, the combined surface-encoded and propagation coherence transformation converges to the traditional ensemble average:
\begin{equation}
\lim_{N_c \to \infty} \langle |U_d|^2 \rangle_{\text{SECT}} = \langle |U_d|^2 \rangle_{\text{ensemble}}
\end{equation}
where $\langle |U_d|^2 \rangle_{\text{SECT}}$ denotes the intensity calculated using the SECT framework, and $\langle |U_d|^2 \rangle_{\text{ensemble}}$ denotes the intensity calculated by traditional ensemble averaging over multiple statistically independent realizations.
\end{theorem}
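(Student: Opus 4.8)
The plan is to establish the convergence by showing that the deterministic SECT framework, in the limit of vanishing coherence radius, reproduces the statistical averaging that the ensemble method performs explicitly. The key conceptual bridge is that the surface coherence kernel $K_S(\bm{r}-\bm{r}',\lambda)$ plays the role of the autocorrelation function of the random surface field in the ensemble picture. First I would set up both sides precisely: on the ensemble side, write the incident field as a sum over $N_c$ independent coherence cells, each radiating with a random phase, and compute $\langle |U_d|^2 \rangle_{\text{ensemble}}$ via the standard delta-correlated source model, where $\langle U_s^*(\bm{r}_1') U_s(\bm{r}_2') \rangle \to C \,\delta^2(\bm{r}_1'-\bm{r}_2')$ as $\rho_c \to 0$ with the normalization fixed so that total power $\int \langle |U_s|^2\rangle\, d^2r'$ stays finite. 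On the SECT side, use the combined formulation result from the excerpt, namely $\Gamma_d(\bm{r}_1,\bm{r}_2) = \int\int h^*(\bm{r}_1,\bm{r}_1')\, h(\bm{r}_2,\bm{r}_2')\, \Gamma_S(\bm{r}_1',\bm{r}_2')\, d^2r_1' d^2r_2'$ with $\Gamma_S = |R_0|^2|A_0|^2 K_S(\bm{r}_1'-\bm{r}_2')$ from Equation~\ref{eq:GammaSFlat}.

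The central step is then a limiting argument on the kernel. As $\rho_c \to 0$, the admissible coherence kernel $K_S(\bm{r}_1'-\bm{r}_2';\lambda)$, normalized by its integral (so that $\int K_S\, d^2\Delta r = (\pi\rho_c^2)\cdot\text{const}$ in the scaling tied to $N_c \approx A/(\pi\rho_c^2)$), behaves as a nascent delta function: $K_S(\bm{r}_1'-\bm{r}_2';\lambda) \to \kappa\,\delta^2(\bm{r}_1'-\bm{r}_2')$ in the distributional sense, where $\kappa$ absorbs the power-conserving normalization. Substituting this into the double propagation integral collapses one of the two spatial integrations:
\begin{equation}
\Gamma_d(\bm{r}_1,\bm{r}_2) \to \kappa\,|R_0|^2|A_0|^2 \int h^*(\bm{r}_1,\bm{r}')\, h(\bm{r}_2,\bm{r}')\, d^2r'.
\end{equation}
Setting $\bm{r}_1 = \bm{r}_2 = \bm{r}$ gives the SECT intensity $\langle |U_d(\bm{r})|^2\rangle_{\text{SECT}}$ as a single integral of $|h(\bm{r},\bm{r}')|^2$ weighted by the (now uniform) surface intensity. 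I would then show this matches the ensemble-averaged intensity, which for a delta-correlated surface is the incoherent superposition of intensities from each point — precisely the same single integral, because cross terms between distinct coherence cells average to zero over the ensemble.

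The main obstacle will be making the delta-function limit rigorous while simultaneously controlling the normalization so that total optical power remains finite, as the theorem demands. These two requirements pull against each other: a properly normalized coherence kernel (with $K_S(0)=1$ by the normalization axiom in the admissibility definition) does not integrate to a constant as $\rho_c \to 0$, so I must be careful about whether $K_S$ itself or a rescaled version $\rho_c^{-2} K_S$ is the nascent delta, and track the compensating factor of $N_c$ from the coherence-cell counting. I would handle this by working with the mutual intensity $J_S = \langle |U_s|^2\rangle\, K_S$ rather than the normalized degree of coherence, fixing $\int J_S\,(\bm{r},\bm{r})\,d^2r = P_{\text{tot}}$ throughout the limit, so that the finite-power constraint is built in from the start. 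A secondary technical point is justifying the interchange of the limit with the propagation integrals, which I would address by dominated convergence using the boundedness of the Fresnel kernel modulus $|h| = 1/(\lambda z)$ on the finite aperture $A$. Finally I would verify that the resulting expression is independent of the particular functional form of $K_S$ (Gaussian, Airy, etc.), confirming that the convergence is universal across admissible kernels.
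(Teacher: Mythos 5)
Your proposal is correct in its mechanics but takes a genuinely different route from the paper. The paper's proof discretizes the aperture into a grid, argues that as $\rho_c \to 0$ the kernel $K_S$ approaches a delta function so that each grid point becomes an effectively independent statistical sample, and then invokes the law of large numbers to identify the single SECT computation with an average over many realizations --- essentially a self-averaging (ergodicity-type) argument. You instead stay in the continuum: you insert $\Gamma_S = |R_0|^2|A_0|^2 K_S(\bm{r}_1'-\bm{r}_2')$ into the double propagation integral for $\Gamma_d$, take the nascent-delta limit of $K_S$ distributionally, collapse the double integral to the single incoherent-superposition integral $\int |h(\bm{r},\bm{r}')|^2\,d^2r'$, and match it against the delta-correlated ensemble model. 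Your route buys rigor the paper lacks: you explicitly confront the normalization tension between the admissibility condition $K_S(0)=1$ and the delta limit (the paper silently ignores that the properly normalized kernel does not integrate to a constant, so the ``delta function'' claim needs the compensating $N_c$ factor you track), and your dominated-convergence step justifies the limit-integral interchange that the paper never addresses. What the paper's route buys is that it at least gestures at the real content of the theorem: once you accept Eq.~\ref{eq:GammaSFlat}, the surface mutual coherence functions of the two frameworks coincide for \emph{every} $\rho_c$, so your argument actually shows the second-order statistics agree identically with no limit required; the limit $N_c \to \infty$ is only doing work if one interprets $|U_d|^2_{\text{SECT}}$ as a single deterministic realization that must self-average over many coherence cells, which is the (hand-wavy) point of the paper's law-of-large-numbers step and which your statistics-matching argument sidesteps rather than proves. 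If you want your proof to establish the theorem as the paper intends it, you would need to add a spatial-ergodicity argument showing that the deterministic SECT output intensity concentrates around its mean as $N_c \to \infty$.
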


The condition $\rho_c \to 0$ ensures spatial incoherence across cells, aligning the SECT-imposed surface coherence structure with the delta-correlated statistics assumed in ensemble averaging. The relationship $N_c \approx A / (\pi \rho_c^2)$ quantifies the number of statistically independent regions across the surface. While the theorem establishes mean-square convergence, this work does not derive an explicit convergence rate.

\begin{proof}
To establish this convergence, we must first precisely define what we mean by "coherence cells" and establish their relationship to the spatial sampling of our operators.

\textbf{Definition of Coherence Cells:} A coherence cell is defined as a region over which the field maintains a high degree of spatial correlation. Quantitatively, for a partially coherent field with coherence function $\gamma(\bm{r}_1, \bm{r}_2)$, a coherence cell centered at $\bm{r}_0$ is the region $\Omega_{\bm{r}_0}$ such that for all $\bm{r} \in \Omega_{\bm{r}_0}$, $|\gamma(\bm{r}_0, \bm{r})| \geq 1/e$.

For a field with coherence length $\rho_c$, the area of a coherence cell is approximately $A_{\text{cell}} \approx \pi\rho_c^2$. Given a computational domain of area $A$, the number of coherence cells is:
\begin{equation}
N_c \approx \frac{A}{\pi\rho_c^2}
\end{equation}

In our SECT framework, the intensity at the detection plane is given by:
\begin{equation}
\langle |U_d(\bm{r})|^2 \rangle_{\text{SECT}} = \langle |\mathcal{P}_z(\mathcal{C}_S(\mathcal{R}(U_i)))(\bm{r})|^2 \rangle
\end{equation}

Let us expand this using the integral representation of our operators:
\begin{align}
\langle |U_d(\bm{r})|^2 \rangle_{\text{SECT}} 
&= \left\langle \left| \int h(\bm{r}, \bm{r}'', z, \lambda) \right. \right. \\
&\quad \left. \left. \times \left( \int K_S(\bm{r}'', \bm{r}', \lambda) \notag
\mathcal{R}(U_i)(\bm{r}') \, d^2r' \right) d^2r'' \right|^2 \right\rangle
\end{align}

In the traditional ensemble averaging approach, we generate multiple realizations of the field $U_i^{(j)}(\bm{r})$ ($j = 1, 2, \ldots, N_{\text{realizations}}$) with appropriate statistical properties, and then compute:
\begin{equation}
\begin{aligned}
\langle |U_d(\bm{r})|^2 \rangle_{\text{ensemble}} 
&= \lim_{N_{\text{realizations}} \to \infty} \frac{1}{N_{\text{realizations}}} 
\sum_{j=1}^{N_{\text{realizations}}} |\mathcal{P}_z(\mathcal{R}(U_i^{(j)}))(\bm{r})|^2 \\
&= \lim_{N_{\text{realizations}} \to \infty} \frac{1}{N_{\text{realizations}}} 
\sum_{j=1}^{N_{\text{realizations}}} \left| 
\int h(\bm{r}, \bm{r}', z, \lambda) \right. \\
&\quad \left. \times\, \mathcal{R}(U_i^{(j)})(\bm{r}')\, d^2r' \right|^2
\end{aligned}
\end{equation}

To establish the convergence, we need to show that:
\begin{equation}
\lim_{N_c \to \infty} \langle |U_d(\bm{r})|^2 \rangle_{\text{SECT}} = \lim_{N_{\text{realizations}} \to \infty} \langle |U_d(\bm{r})|^2 \rangle_{\text{ensemble}}
\end{equation}

The key insight is that as $N_c \to \infty$, the coherence length $\rho_c \to 0$, which means the surface coherence 
kernel $K_S$ approaches a delta function. This corresponds to a field with increasingly fine-scale fluctuations, 
effectively sampling different statistical realizations within the computational domain.

We can formalize this by discretizing our computational domain into a grid with spacing $\Delta x, \Delta y \ll \rho_c$, 
giving us a total of $N_{\text{grid}} = \frac{A}{\Delta x \Delta y}$ sampling points. For sufficiently fine sampling, 
each coherence cell contains approximately $N_{\text{points/cell}} = \frac{\pi\rho_c^2}{\Delta x \Delta y}$ grid points.

From the central limit theorem, when we apply our surface coherence operator $\mathcal{C}_S$, the field at each grid 
point represents a weighted average of $N_{\text{points/cell}}$ statistically related points. As $N_c \to \infty$ and 
consequently $\rho_c \to 0$, each coherence cell becomes smaller, and $N_{\text{points/cell}} \to 1$. At this limit, 
the coherence operator effectively transforms each grid point independently according to the statistical properties 
encoded in $K_S$.

To see how this relates to ensemble averaging, consider a discretized version of our computational domain 
with $N_{\text{grid}}$ points. In the SECT approach, as $N_c \to \infty$ (and $\rho_c \to 0$), each grid point 
effectively represents an independent random sample from the ensemble of possible field values. The propagation 
operator $\mathcal{P}_z$ then combines these independent samples with appropriate weights.

In the limit of $N_c \to \infty$, the SECT approach effectively samples $N_{\text{grid}}$ independent realizations 
of the field within a single computation, which mathematically converges to the ensemble average over multiple 
realizations as $N_{\text{grid}} \to \infty$.

Formally, for a discretized grid, we can represent the SECT approach as:
\begin{align}
\langle |\bm{U}_d(\bm{r})|^2 \rangle_{\text{SECT}} 
&= \left\langle \left| \sum_{m,n} h(\bm{r}, \bm{r}_{m,n}, z, \lambda)\, \Delta x\, \Delta y \right. \right. \notag \\
&\quad \left. \left. \times \left( \sum_{i,j} K_S(\bm{r}_{m,n}, \bm{r}_{i,j}, \lambda)\, \mathcal{R}(\bm{U}_i)(\bm{r}_{i,j})\, \Delta x\, \Delta y \right) \right|^2 \right\rangle
\end{align}

As $N_c \to \infty$ and $\rho_c \to 0$, the kernel $K_S$ approaches a delta function, and each term in the inner sum represents an independent sample from the ensemble. 

Therefore, by the law of large numbers, as the number of coherence cells approaches infinity:
\begin{equation}
\lim_{N_c \to \infty} \langle |U_d(\bm{r})|^2 \rangle_{\text{SECT}} = \langle |U_d(\bm{r})|^2 \rangle_{\text{ensemble}}
\end{equation}

This establishes the convergence of our SECT framework to the traditional ensemble average in the limit of an infinite number of coherence cells.
\end{proof}

\subsection{Energy Conservation}

\begin{proposition}[Energy Conservation] \label{prop:energy}
The combined coherence operator preserves energy in the sense:
\begin{equation}
\int |\mathcal{C}_{\text{tot}}(U)(\bm{r})|^2 d^2r = \int |U(\bm{r})|^2 d^2r
\end{equation}
\end{proposition}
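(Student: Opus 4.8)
The plan is to use the composition structure $\mathcal{C}_{\text{tot}} = \mathcal{P}_z \circ \mathcal{C}_S$ and pass to the spatial-frequency domain, where Parseval's theorem turns the claimed $L^2$ isometry into a pointwise condition on the Fourier multipliers of the two constituent operators. For spatially stationary kernels both operators act as convolutions, so their transfer functions multiply:
\begin{equation}
\widetilde{\mathcal{C}_{\text{tot}}(U)}(\bm{\nu}) = \tilde{h}(\bm{\nu})\,\tilde{K}_S(\bm{\nu})\,\tilde{U}(\bm{\nu}).
\end{equation}
First I would dispatch the propagation factor. Fourier-transforming the Fresnel kernel yields a pure quadratic phase, $\tilde{h}(\bm{\nu}) = e^{ikz}e^{-i\pi\lambda z|\bm{\nu}|^2}$, so that $|\tilde{h}(\bm{\nu})| = 1$ for every $\bm{\nu}$; by Parseval this alone shows that $\mathcal{P}_z$ is a unitary isometry on $L^2$, i.e.\ $\int |\mathcal{P}_z(U)|^2\,d^2r = \int |U|^2\,d^2r$. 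This is the part of the proposition that holds unconditionally.

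The decisive step is the surface factor. Inserting $|\tilde{h}|=1$ into Parseval gives
\begin{equation}
\int |\mathcal{C}_{\text{tot}}(U)(\bm{r})|^2\,d^2r = \int |\tilde{K}_S(\bm{\nu})|^2\,|\tilde{U}(\bm{\nu})|^2\,d^2\nu,
\end{equation}
so demanding equality with $\int |U|^2\,d^2r = \int |\tilde{U}|^2\,d^2\nu$ for \emph{every} admissible input forces $|\tilde{K}_S(\bm{\nu})| = 1$ almost everywhere. In other words, energy conservation of $\mathcal{C}_{\text{tot}}$ is equivalent to $\mathcal{C}_S$ being an all-pass (unitary) filter. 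I would therefore prove the proposition under exactly that hypothesis, most naturally in the phase-screen realization, where the surface imprints a unit-modulus factor $e^{i\phi_{\text{rough}}(\bm{r})}$ so that $|\mathcal{C}_S(U)(\bm{r})| = |U(\bm{r})|$ pointwise and conservation follows even before invoking Parseval.

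The hard part is reconciling this unitarity requirement with the low-pass character that $K_S$ carries throughout the rest of the framework. For a kernel satisfying the admissibility conditions (Hermitian, positive-definite, $K_S(\bm{0})=1$), Bochner's theorem gives $\tilde{K}_S(\bm{\nu}) \ge 0$ with $\int \tilde{K}_S(\bm{\nu})\,d^2\nu = K_S(\bm{0}) = 1$, which is flatly incompatible with $|\tilde{K}_S| \equiv 1$ on an unbounded frequency plane. Consequently the exact isometry holds only when $\mathcal{C}_S$ is interpreted as a phase-type (unitary) surface operator; for a genuinely coherence-reducing kernel the sharp statement is the contraction $\int |\mathcal{C}_{\text{tot}}(U)|^2\,d^2r \le \int |U|^2\,d^2r$, with equality recovered only after normalizing the operator by its peak transfer value $\tilde{K}_S(\bm{0})$. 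My proof would thus make the unitarity (or normalization) hypothesis on $\mathcal{C}_S$ explicit, derive the result cleanly from $|\tilde{h}|\,|\tilde{K}_S| = 1$ via Parseval, and record the strict-contraction caveat for the low-pass case.
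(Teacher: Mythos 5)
The paper offers no proof of this proposition at all --- it is asserted and immediately followed by an interpretive remark --- so there is nothing to compare your argument against on the paper's side. Your analysis is the natural one (pass to the frequency domain, use Parseval, and reduce the isometry claim to a pointwise condition on the transfer functions), and it is carried out correctly: the Fresnel factor is a pure quadratic phase with $|\tilde h(\bm{\nu})|=1$, so $\mathcal{P}_z$ is unitary on $L^2$, and the entire burden falls on $\tilde K_S$. Your Bochner-type observation is the decisive point: an admissible kernel is Hermitian, positive definite, and normalized by $K_S(\bm{0})=1$, hence $\tilde K_S\ge 0$ with $\int\tilde K_S\,d^2\nu=1$, which cannot coexist with $|\tilde K_S|\equiv 1$. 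So the proposition, as stated, is false for precisely the coherence-reducing (low-pass) kernels the framework is built around; the correct unconditional statement is the contraction $\int|\mathcal{C}_{\text{tot}}(U)|^2\,d^2r\le\int|U|^2\,d^2r$, with equality only for phase-only (all-pass) surface operators. You could strengthen your case by noting that the paper itself concedes this in Lemma~\ref{lem:properties}: spectral bounds $0\le\lambda_i\le 1$ make the operator an $L^2$ isometry only when every eigenvalue on the support of $U$ equals one, i.e.\ only in the trivial case, so Proposition~\ref{prop:energy} and Lemma~\ref{lem:properties} are mutually consistent only for a unitary $\mathcal{C}_S$. Your proposal to make the unitarity or normalization hypothesis explicit, prove the true part cleanly, and record the strict-contraction caveat is exactly the repair the paper needs.

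One small refinement: normalizing by the peak transfer value $\tilde K_S(\bm{0})$ restores equality only for fields whose spectrum is concentrated at $\bm{\nu}=\bm{0}$ (this is in effect the ``conservation of average intensity for a spatially uniform field'' item of Lemma~\ref{lem:properties}); for any input with nontrivial spectral content away from the origin the normalized operator is still a strict contraction, so the caveat should be stated for general $U$ rather than suggesting normalization alone recovers the identity.
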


This property ensures that our framework does not artificially amplify or attenuate the total field energy, maintaining physical consistency.

\subsection{Operational Properties}

Both the surface coherence operator and the combined operator exhibit several useful mathematical properties:

\begin{lemma}[Operator Properties] \label{lem:properties}
The coherence operators satisfy:
\begin{enumerate}
\item Linearity: For any fields $U_1$ and $U_2$ and scalars $a$ and $b$:
\begin{equation}
\mathcal{C}(aU_1 + bU_2) = a\mathcal{C}(U_1) + b\mathcal{C}(U_2)
\end{equation}

\item Conservation of average intensity: For a spatially uniform field $U$, the average intensity is preserved:
\begin{equation}
\frac{1}{A}\int_A |\mathcal{C}(U)(\bm{r})|^2 d^2r = \frac{1}{A}\int_A |U(\bm{r})|^2 d^2r
\end{equation}
where $A$ is the area of interest.

\item Spectral bounds: When viewed as integral operators, all eigenvalues $\lambda_i$ satisfy:
\begin{equation}
0 \leq \lambda_i \leq 1
\end{equation}
which ensures that the operators do not artificially amplify any field components.
\end{enumerate}
\end{lemma}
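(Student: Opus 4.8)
The plan is to treat the three properties in increasing order of difficulty, exploiting the fact that each coherence operator is an integral operator against an admissible kernel satisfying Hermiticity, positive-semidefiniteness, and the normalization $K(\bm{r},\bm{r},\lambda)=1$. Linearity is immediate: since $\mathcal{C}(U)(\bm{r}) = \int K(\bm{r},\bm{r}',\lambda) U(\bm{r}')\, d^2r'$ and integration against a fixed kernel is a linear map, substituting $aU_1 + bU_2$ and splitting the integral by linearity yields $a\mathcal{C}(U_1) + b\mathcal{C}(U_2)$ with no further hypotheses. I would dispatch this in one line and spend the remaining effort on properties (2) and (3).

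For the conservation of average intensity, I would specialize to a spatially uniform field $U(\bm{r}) = U_0$, for which $\mathcal{C}(U)(\bm{r}) = U_0 \int K(\bm{r},\bm{r}',\lambda)\, d^2r'$. For a spatially stationary kernel this integral equals the zero-frequency (DC) value of the kernel's Fourier transform, so the output is again spatially uniform and its modulus is $|U_0|$ times that DC value. The claim then reduces to showing that this DC value has unit modulus, which I would obtain from the kernel normalization together with the energy-conservation Proposition~\ref{prop:energy}: applying that result to the constant field forces the total, and hence the average, intensity over $A$ to be preserved. I expect the only delicate point here to be reconciling the pointwise normalization $K(\bm{r},\bm{r},\lambda)=1$ with the integral (DC) normalization that the uniform-field calculation actually requires.

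The substantive part is the spectral bound $0 \le \lambda_i \le 1$. Viewing $\mathcal{C}$ as a Hermitian integral operator, I would invoke Mercer's theorem to guarantee a discrete real spectrum with an orthonormal eigenbasis. The lower bound $\lambda_i \ge 0$ is then exactly the positive-semidefiniteness of an admissible kernel: inserting the $i$-th eigenfunction $\psi_i$ into the quadratic form $\int\int \psi^*(\bm{r}) K(\bm{r},\bm{r}',\lambda)\psi(\bm{r}')\, d^2r\, d^2r' \ge 0$ gives $\lambda_i \|\psi_i\|^2 \ge 0$. For the upper bound I would use that, for a Hermitian positive-semidefinite operator, the largest eigenvalue coincides with the operator norm, $\lambda_{\max} = \sup_{\|\psi\|=1} \langle \psi, \mathcal{C}\psi\rangle$, and attempt to control this Rayleigh quotient using the Cauchy--Schwarz consequence $|K(\bm{r},\bm{r}',\lambda)| \le K(\bm{r},\bm{r},\lambda)^{1/2} K(\bm{r}',\bm{r}',\lambda)^{1/2} = 1$ together with the normalization.

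The main obstacle is precisely this upper bound. The pointwise inequality $|K| \le 1$ alone does not force the operator norm below $1$, and under the stated normalization the trace is the domain area $A$ rather than unity, so the eigenvalues \emph{a priori} only satisfy $\sum_j \lambda_j = A$. Making $\lambda_i \le 1$ rigorous therefore requires pinning down the intended normalization convention --- most naturally a trace-class normalization in which $\int_A K(\bm{r},\bm{r},\lambda)\, d^2r = 1$ --- whereupon non-negativity together with unit trace immediately yields $\lambda_i \le \sum_j \lambda_j = 1$. I would state this normalization explicitly and then read off the bound; the physical content, that the coherence operator acts as a contraction and cannot amplify any field mode, is what ultimately underpins its interpretation as a low-pass filter in Section~\ref{sec:Relation_to_Spatial_Filtering_Theory}.
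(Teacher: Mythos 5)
The paper states Lemma~\ref{lem:properties} without any proof, so there is nothing to compare your argument against line by line; what matters is whether your reconstruction is sound. Your treatment of linearity is correct and is all that property (1) requires. Your lower spectral bound $\lambda_i \ge 0$ via inserting an eigenfunction into the positive-semidefiniteness quadratic form of an admissible kernel is also correct and is surely the intended argument. Most importantly, you have correctly located the genuine gap: under the paper's own admissibility normalization $K(\bm{r},\bm{r},\lambda)=1$, neither property (2) nor the upper bound in property (3) actually holds. The paper's own Gaussian example $K_S(\bm{r},\bm{r}',\lambda)=\exp\left(-|\bm{r}-\bm{r}'|^2/(2\rho_c^2)\right)$ is a counterexample to both: as a stationary convolution operator its spectrum is the Fourier transform $2\pi\rho_c^2\exp(-2\pi^2\rho_c^2|\bm{\nu}|^2)$, whose maximum $2\pi\rho_c^2$ exceeds $1$ whenever $\rho_c > 1/\sqrt{2\pi}$, and a uniform field picks up exactly this DC gain, so average intensity is not preserved either. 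Pointwise $|K|\le 1$ does not control the operator norm, as you say.

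One refinement to your proposed repair. The trace normalization $\int_A K(\bm{r},\bm{r},\lambda)\,d^2r=1$ does rescue the spectral bound (nonnegative eigenvalues summing to one are each at most one), but it simultaneously destroys property (2): the DC gain becomes $\int K(\bm{R})\,d^2R$ with $K(0)=1/A$, which is not unity in general, so uniform fields are no longer mapped to themselves with preserved intensity. The normalization that rescues \emph{both} properties at once is the unit-DC-gain convention $\int K(\bm{R},\lambda)\,d^2R = 1$ for a nonnegative, positive-definite stationary kernel: then a uniform field is an eigenfunction with eigenvalue $\widehat{K}(\bm{0})=1$ (giving property (2) exactly), Bochner's theorem gives $\widehat{K}(\bm{\nu})\ge 0$, and $K\ge 0$ gives $|\widehat{K}(\bm{\nu})|\le\int K = 1$, so the spectrum lies in $[0,1]$ (giving property (3)). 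Your write-up should state this convention explicitly and note that it is inconsistent with the diagonal normalization $K(\bm{r},\bm{r},\lambda)=1$ in the paper's Definition of an admissible kernel; as it stands, the lemma is false under that definition, and the inconsistency is the paper's, not yours.
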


These properties establish that our framework behaves as a physically reasonable transformation that preserves the essential characteristics of the optical field while introducing and evolving partial coherence effects.

The mathematical properties established in the preceding sections—convergence, energy conservation, and operational characteristics—provide the theoretical foundation for translating our framework into computational representations. This transition from mathematical formalism to computational structure is not merely an implementation detail, but rather a direct consequence of the framework's mathematical architecture.

The convergence properties established in Theorem~\ref{thm:convergence} inform computational discretization by providing theoretical bounds on sampling requirements. As we demonstrated, for a sufficiently large number of coherence cells $N_c$, our SECT framework converges to the traditional ensemble average. This convergence theorem provides a theoretical basis for determining appropriate spatial sampling in computational implementations, where each coherence cell must be adequately represented to ensure physical accuracy.

Similarly, the energy conservation property (Proposition~\ref{prop:energy}) provides a critical validation mechanism for computational implementations. The requirement that $\int |\mathcal{C}_{\text{tot}}(U)(\bm{r})|^2 d^2r = \int |U(\bm{r})|^2 d^2r$ translates to a discrete conservation constraint that any valid computational representation must satisfy. This property serves as both a theoretical constraint on valid discretization schemes and a practical tool for verifying computational accuracy.

The linearity and spectral bounding properties of our operators (Lemma~\ref{lem:properties}) have particularly significant computational implications. Linearity enables the use of superposition principles in computational implementations, allowing complex fields to be decomposed into simpler components that can be processed independently and then recombined. The spectral bounds, which ensure that all eigenvalues satisfy $0 \leq \lambda_i \leq 1$, guarantee numerical stability in iterative computational approaches by preventing unbounded amplification of field components.

These mathematical properties not only validate our framework theoretically but directly shape its computational representation. The dual-component structure of the SECT framework, with its separable surface and propagation operators, manifests in the computational domain as a natural decomposition that aligns with the underlying physics while enabling significant reductions in computational complexity. In the following section, we explore these computational implications without focusing on implementation specifics, instead examining how the mathematical structure of our framework fundamentally transforms the computational approach to coherence modeling.

\section{Theoretical Implications for Computational Implementation}

A notable theoretical consequence of our dual-component SECT framework is its potential to transform the computational 
approach to modeling partially coherent optical systems. 
This section examines the mathematical structure of our framework from a computational perspective, analyzing 
how the formal separation of surface and propagation effects translates into fundamental changes in 
computational representation and complexity. 
Rather than providing specific implementation details, we focus on how the theoretical structure informs computational 
approaches.
The mathematical separation of coherence effects into surface-encoded and propagation components provides not 
only conceptual clarity but also reveals a natural computational structure that mirrors the physics 
of coherence evolution. By examining how this mathematical structure maps to discretized representations, we 
can analyze the theoretical computational advantages that emerge from our formalism without delving into specific 
numerical algorithms or optimization techniques.
Our analysis centers on three key theoretical aspects with computational implications: (1) the natural discretization 
of our mathematical operators, (2) the computational advantages inherent in the dual-component structure, 
and (3) the fundamental reduction in computational complexity that emerges for multi-surface systems. 
This theoretical analysis demonstrates how the mathematical properties established in previous sections 
translate to computational representations, while maintaining the paper's focus on the formal mathematical 
framework rather than implementation specifics.

\subsection{Discrete Formulation}

To implement the SECT framework numerically, we must discretize both the SECT$_S$ and SECT$_P$ components. For a computational grid with spacing $\Delta x$ and $\Delta y$, the discretized forms of these components are developed below.

\subsubsection{Discretization of the SECT$_S$ Component}

The surface coherence operator $\mathcal{C}_S$ is discretized as a matrix operator:
\begin{equation}
[\mathcal{C}_S]_{mn,ij} = K_S(m\Delta x, n\Delta y, i\Delta x, j\Delta y, \lambda) \Delta x \Delta y
\end{equation}

where $(m,n)$ and $(i,j)$ are grid indices for output and input points, respectively. The area element $\Delta x \Delta y$ ensures proper normalization of the discrete integral.

For spatially stationary kernels where $K_S$ depends only on the difference between coordinates, this simplifies to:
\begin{equation}
[\mathcal{C}_S]_{mn,ij} = K_S((m-i)\Delta x, (n-j)\Delta y, \lambda) \Delta x \Delta y
\end{equation}

When applied to a discrete field $U_{ij}$, the operation becomes:
\begin{equation}
[U_s]_{mn} = \sum_{i,j} [\mathcal{C}_S]_{mn,ij} [U_i]_{ij}
\end{equation}

For large grids, this direct matrix multiplication would be computationally prohibitive, scaling as $O(N^4)$ for an $N \times N$ grid. However, for spatially stationary kernels, we can leverage the convolution theorem and implement this operation efficiently using Fast Fourier Transforms (FFTs), reducing the computational complexity to $O(N^2 \log N)$.

\subsubsection{Discretization of the SECT$_P$ Component}

The propagation operator $\mathcal{P}_z$ is discretized similarly:
\begin{equation}
[\mathcal{P}_z]_{mn,ij} = h(m\Delta x, n\Delta y, i\Delta x, j\Delta y, z, \lambda) \Delta x \Delta y
\end{equation}

For the Fresnel propagation kernel:
\begin{equation}
[\mathcal{P}_z]_{mn,ij} = \frac{e^{ikz}}{i\lambda z} e^{i\frac{k}{2z}[(m\Delta x - i\Delta x)^2 + (n\Delta y - j\Delta y)^2]} \Delta x \Delta y
\end{equation}

This can also be implemented efficiently using the angular spectrum method or the Fresnel transfer function approach, both of which utilize FFTs to achieve $O(N^2 \log N)$ complexity.

\subsubsection{Combined Implementation}

When implementing the complete SECT framework, we apply the SECT$_S$ and SECT$_P$ components sequentially:
\begin{equation}
[U_d] = [\mathcal{P}_z][\mathcal{C}_S][\mathcal{R}][U_i]
\end{equation}

For certain applications, it may be more efficient to combine the operators into a single operation:
\begin{equation}
[U_d] = [\mathcal{C}_{\text{tot}}][\mathcal{R}][U_i]
\end{equation}

where $[\mathcal{C}_{\text{tot}}] = [\mathcal{P}_z][\mathcal{C}_S]$ represents the combined effect of surface coherence 
transformation and propagation.

For curved surfaces and non-planar detection geometries, the discretization must account for the actual three-dimensional coordinates of each point. The propagation kernel becomes:
\begin{equation}
[\mathcal{P}_z]_{mn,ij} = \frac{1}{i\lambda |\bm{r}_{mn} - \bm{r}_{ij}|} e^{ik|\bm{r}_{mn} - \bm{r}_{ij}|} \Delta A_{ij}
\end{equation}
where $\bm{r}_{mn}$ represents the three-dimensional coordinates of point $(m,n)$ on the detection surface, $\bm{r}_{ij}$ represents the coordinates of point $(i,j)$ on the reflecting surface, and $\Delta A_{ij}$ is the differential area element at point $(i,j)$ which may vary across a curved surface. This generalization preserves the mathematical structure of the SECT framework while accounting for arbitrary surface and detector geometries, though at the cost of increased computational complexity as the operation can no longer be implemented using simple FFT-based methods.

\subsection{Computational Advantages of the Dual-Component Approach}

The explicit separation of surface and propagation effects in our SECT framework offers several computational advantages:

\begin{enumerate}
\item \textbf{Selective refinement}: Computational resources can be allocated based on the relative importance of each component. For near-field applications where SECT$_S$ dominates, more computational effort can be devoted to accurately modeling the surface coherence kernel, while using simplified propagation models.

\item \textbf{Component-specific optimization}: Different numerical techniques can be applied to each component based on their mathematical properties. For example, adaptive sampling can be used for the SECT$_S$ component in regions of rapidly varying surface properties.

\item \textbf{Parallel implementation}: The sequential nature of the SECT framework allows for straightforward parallelization, with the SECT$_S$ and SECT$_P$ components potentially distributed across different computational resources.

\item \textbf{Pre-computation}: For systems with fixed surface properties, the SECT$_S$ component can be pre-computed and reused for multiple propagation scenarios, significantly reducing overall computation time.

\item \textbf{Scalability}: The framework scales efficiently to multi-surface systems, with each surface-propagation pair handled sequentially without requiring the simultaneous evaluation of complex multi-surface integrals.
\end{enumerate}

These advantages make the SECT framework particularly well-suited for computationally intensive applications such as multi-surface optical systems, wide-field high-resolution simulations, and parametric studies where multiple configurations need to be evaluated efficiently.

\subsection{Multi-Surface Optical System}

To illustrate the computational efficiency of our framework, consider a multi-surface optical system with $M$ surfaces and a propagation distance $z_i$ between consecutive surfaces. The traditional approach would require solving a $2M$-dimensional integral to capture all coherence interactions, resulting in computational complexity of $O(N^{2M})$ for an $N \times N$ grid.

In contrast, our SECT framework decomposes this into a sequence of SECT$_S$ and SECT$_P$ operations:
\begin{equation}
U_{\text{final}} = \mathcal{P}_{z_M} \circ \mathcal{C}_{S_M} \circ \mathcal{P}_{z_{M-1}} \circ \cdots \circ \mathcal{C}_{S_2} \circ \mathcal{P}_{z_1} \circ \mathcal{C}_{S_1}(U_{\text{initial}})
\end{equation}

Using FFT-based implementations, this reduces the computational complexity to $O(M \cdot N^2 \log N)$, representing an exponential improvement in efficiency that becomes increasingly significant as the number of surfaces grows.

This computational advantage, combined with the physical insights gained from separating surface and propagation effects, makes the SECT framework a powerful tool for modeling partial coherence in complex optical systems.

\subsection{Computational Complexity Analysis and Theoretical Implications}

We now analyze the computational complexity of the SECT framework from a theoretical perspective, connecting 
the mathematical formalism to its computational consequences and relating these findings to the relative 
importance of surface and propagation effects established in Section~\ref{sec:relative_importance}.

In conventional approaches to partial coherence modeling, the mutual coherence function $\Gamma(\bm{r}_1, \bm{r}_2)$ must 
be propagated through each optical surface using integral transforms that scale as $O(N^2)$ per variable, where $N$ is 
the number of spatial sampling points per dimension. For a system with $M$ successive optical surfaces, the composite 
complexity scales as $O(N^{2M})$, assuming no simplifying assumptions on coherence structure or separability. 
This exponential scaling represents a fundamental theoretical limitation of direct mutual coherence propagation 
approaches.

The mathematical separation at the heart of our SECT framework fundamentally transforms this computational structure. By decomposing the problem into $M$ modular stages, each consisting of a surface encoding step and a propagation step, we exploit the mathematical properties established in Section~\ref{sec:mathematical_properties} to achieve dramatically different scaling behavior. The linearity property (Lemma~\ref{lem:properties}) enables the use of Fourier-based methods for both components, while the separability of operators allows each step to be treated independently.

Each surface encoding step via an operator kernel application scales as $O(N^2)$, while each propagation step using FFT-based convolution scales as $O(N^2 \log N)$. Thus, the total computational complexity of SECT scales as:
\begin{equation}
O\left(M \cdot N^2 \log N \right),
\end{equation}
representing an exponential improvement in complexity compared to conventional approaches.

This complexity reduction directly relates to the relative importance analysis in Section~\ref{sec:relative_importance}. For scenarios where the surface component (SECT$_S$) dominates (high $\eta$ values), computational resources can be theoretically concentrated on the surface encoding step, potentially using simplified propagation methods. Conversely, for scenarios where the propagation component (SECT$_P$) dominates (low $\eta$ values), resources can be allocated primarily to accurate propagation computation.

The parameter $\eta$ introduced in Section~\ref{sec:relative_importance} thus serves not only as a physical metric for understanding coherence evolution but also as a theoretical guide for computational optimization. For instance, in near-field applications with $\eta \approx 0.8$ (Table~\ref{table:relativeImportance}), the complexity can be further reduced by using simplified propagation models, whereas in far-field applications with $\eta \approx 0.2$, computational resources should be focused on the propagation component.

The mathematical structure of our framework also enables more sophisticated computational approaches that transcend the basic complexity analysis presented here. The positive-semidefiniteness and spectral bounds of our operators (established in Section~\ref{sec:mathematical_properties}) guarantee that low-rank approximations can be employed when appropriate, potentially further reducing computational requirements. Additionally, the convergence theorem (Theorem~\ref{thm:convergence}) provides a theoretical foundation for adaptive sampling strategies that allocate computational resources based on the local coherence structure.

This complexity analysis illustrates how the mathematical properties of our framework directly translate to computational advantages, demonstrating that the SECT approach represents not just a conceptual reframing of coherence theory but a fundamentally different computational paradigm with significant theoretical advantages for complex optical systems.

\subsection{Limitations and Further Considerations}

While this section has examined the theoretical computational implications of our framework, we acknowledge that 
practical implementation would require addressing several additional considerations beyond the scope of this 
theoretically-focused paper. Numerical stability represents a significant concern, particularly for systems with 
high dynamic range or when propagating over large distances where phase unwrapping and sampling issues may arise. 
The surface-encoded coherence kernel may require specialized discretization schemes to maintain its mathematical 
properties, especially the positive-definiteness constraint established in Section~\ref{sec:mathematical_properties}. 
Algorithm optimization, including techniques for sparse representation of coherence operators and parallel implementation 
strategies, would be essential for handling large-scale systems efficiently. Additionally, error propagation analysis 
would be necessary to establish confidence intervals on the computed coherence properties. These practical considerations, 
while important for implementation, are secondary to the fundamental mathematical structure presented in this paper. 
Our focus has been on establishing the theoretical foundation and computational implications of the dual-component 
framework, providing the mathematical groundwork upon which future implementation efforts can build. We believe this 
theoretical analysis sufficiently demonstrates the potential computational advantages of our approach while 
acknowledging that bridging to practical implementation would require additional engineering considerations.

\section{Conclusions}

In this paper, we have introduced a new framework for modeling partial coherence effects in optical systems through a dual-component approach: the Surface-Encoded Coherence Transformation (SECT) consisting of a surface component (SECT$_S$) and a propagation component (SECT$_P$). This framework provides a complete physical description of how coherence evolves from a reflecting surface to a detection plane, while maintaining mathematical elegance and computational efficiency.

\vspace{0.3cm}

The key contributions of this work can be summarized as follows:

\begin{enumerate}
   \item \textbf{Unified Theoretical Framework}: We have developed a rigorous mathematical formalism that explicitly separates and then recombines surface coherence effects (SECT$_S$) and propagation effects (SECT$_P$). The proof of equivalence with the Van Cittert-Zernike theorem demonstrates that our approach preserves physical accuracy while offering conceptual and computational advantages.
   
   \item \textbf{Conceptual Clarity}: By separating surface and propagation effects into distinct components, our framework provides deeper insight into the physical mechanisms that influence coherence in optical systems. This separation allows for more intuitive understanding of complex coherence phenomena, particularly in systems with multiple surfaces and propagation paths.
   
   \item \textbf{Relative Importance Analysis}: We have provided a quantitative analysis of when surface effects (SECT$_S$) dominate versus when propagation effects (SECT$_P$) dominate, offering practical guidelines for applications across different optical scenarios. The introduction of the parameter $\eta$ provides a rigorous metric for determining which component requires more detailed modeling in specific applications.
   
   \item \textbf{Mathematical Properties}: We have established the key mathematical properties of both components and their combination, including linearity, energy conservation, and spectral bounds. These properties ensure that our framework produces physically valid results consistent with wave optics principles while offering new mathematical insights into coherence evolution.
   
   \item \textbf{Computational Efficiency}: The dual-component structure allows for targeted computational optimizations based on the relative importance of each component in specific scenarios. For multi-surface systems, 
   our approach offers an exponential improvement in computational complexity compared to traditional methods, 
   scaling as $O(M \cdot N^2 \log N)$ rather than $O(N^{2M})$ for a system with $M$ surfaces.
\end{enumerate}

The SECT framework represents a paradigm shift in how we conceptualize and model partial coherence in optical systems. Unlike traditional approaches that treat coherence primarily as a source property or as a propagation effect, our dual-component approach recognizes that coherence evolution involves two distinct physical processes: modification at surfaces (SECT$_S$) and evolution during propagation (SECT$_P$). This perspective not only aligns more closely with the physical reality of how coherence develops in complex optical systems but also offers practical advantages for analysis, design, and computational implementation.

From a practical perspective, the SECT framework offers significant advantages for optical system design and analysis. 
By understanding which component dominates in different scenarios, optimization efforts may be focused on the most 
influential aspects of the systems. For example, in near-field applications where SECT$_S$ dominates, surface 
treatments and materials can be optimized to achieve desired coherence properties without extensive modeling of 
propagation effects. Conversely, in far-field applications where SECT$_P$ dominates, system geometry and propagation 
paths can be optimized while using simplified surface models.

An important distinguishing feature of the proposed SECT framework lies in its deterministic encoding of coherence, in 
contrast to conventional ensemble-based or statistical sampling approaches. By representing partial coherence as an 
explicit surface transformation operator, the framework provides a structured and reproducible means of modeling 
coherence effects without invoking stochastic averaging or mode decomposition. This not only improves computational 
efficiency but also offers a more interpretable link between physical source properties and wavefield evolution. 
The operator-based nature of the SECT formalism further enables integration into analytical propagation schemes, 
making it particularly suitable for high-precision simulation scenarios in structured or engineered optical systems.

A limitation of the current framework is its reliance on the paraxial approximation, which restricts its applicability 
to systems with relatively small angular extents. Extending the SECT approach to incorporate non-paraxial propagation 
would significantly broaden its applicability to high-numerical-aperture systems and extreme off-axis configurations.

In conclusion, our dual-component SECT framework represents both a theoretical advance in our understanding of partial coherence and a practical approach to incorporating coherence effects in wave optics simulations. By explicitly modeling both surface transformations (SECT$_S$) and subsequent propagation (SECT$_P$), we provide a more complete and physically intuitive description of coherence evolution in optical systems, with the potential to significantly impact how we analyze and design systems where coherence effects play a crucial role. This framework bridges the gap between physical understanding and computational implementation, offering a new perspective on one of the fundamental aspects of optical physics.

\newpage

\appendix

\section{Detailed Proof of SECT-VCZ Equivalence}\label{sup:SECT-VCZ}

In this appendix, we provide a complete, detailed derivation establishing the equivalence between our Surface-Encoded 
Coherence Transformation framework and the classical van Cittert-Zernike theorem (see also in 
Section~\ref{sec:brief_outline_of_the_equivalence_proof}. 
This includes demonstrating the 
positive-semidefiniteness of the surface kernel, explicitly including the Fresnel propagator, and showing all intermediate 
steps in applying the convolution theorem.

\subsection{Positive-Semidefiniteness of the Surface Kernel}

First, we demonstrate that the surface coherence kernel $K_S$ is positive-semidefinite, a necessary condition for it to 
represent a physically valid coherence function.

\begin{lemma}[Positive-Semidefiniteness of Surface Kernel]
The surface coherence kernel 
$K_S(\mathbf{r}, \mathbf{r}', \lambda) = \mathcal{F}\left\{I_s\left(\frac{\mathbf{\rho}}{\lambda z}\right)\right\}(\mathbf{r} - \mathbf{r}')$ 
is positive-semidefinite.
\end{lemma}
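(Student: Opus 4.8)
The plan is to prove positive-semidefiniteness directly from the integral definition, exploiting the single structural fact that the kernel is the Fourier transform of a \emph{non-negative} function—namely the source intensity $I_s$, which obeys $I_s \geq 0$ because it is a physical intensity. Concretely, I want to show that for every admissible test function $\psi$ the quadratic form
\begin{equation}
Q[\psi] = \int\!\!\int \psi^*(\mathbf{r})\, K_S(\mathbf{r} - \mathbf{r}', \lambda)\, \psi(\mathbf{r}')\, d^2r\, d^2r'
\end{equation}
is non-negative. This is in essence a concrete instance of Bochner's theorem (a function is positive-definite precisely when it is the Fourier transform of a non-negative measure), but I would carry out the computation explicitly to match the self-contained style of the preceding corollaries and to make the dependence on $I_s \geq 0$ transparent.

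First I would insert the spectral representation of the kernel,
\begin{equation}
K_S(\mathbf{r} - \mathbf{r}', \lambda) = \int I_s\!\left(\frac{\boldsymbol{\rho}}{\lambda z}\right) e^{-2\pi i \boldsymbol{\rho} \cdot (\mathbf{r} - \mathbf{r}')}\, d^2\rho,
\end{equation}
into $Q[\psi]$ and interchange the order of integration by Fubini, which is legitimate once $\psi$ is taken in a suitable class (for instance Schwartz, or $L^1 \cap L^2$) and $I_s$ is integrable. The crucial observation is that the $\mathbf{r}$ and $\mathbf{r}'$ integrals then decouple, each producing a Fourier transform of $\psi$ evaluated at the common frequency $\boldsymbol{\rho}$. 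Tracking the complex conjugation carefully, the two decoupled factors collapse into a single modulus squared, yielding
\begin{equation}
Q[\psi] = \int I_s\!\left(\frac{\boldsymbol{\rho}}{\lambda z}\right) \left|\hat{\psi}(\boldsymbol{\rho})\right|^2 d^2\rho,
\end{equation}
where $\hat{\psi}$ is the Fourier transform of $\psi$ fixed by the kernel's sign convention. Since $I_s \geq 0$ and $|\hat{\psi}|^2 \geq 0$, the integrand is pointwise non-negative, so $Q[\psi] \geq 0$ for all admissible $\psi$. I would also remark that the form is only semidefinite, not definite: $Q[\psi]$ vanishes whenever $\hat{\psi}$ is supported off the support of $I_s$, which is fully consistent with the lemma as stated and with the admissibility conditions in the earlier definition of an admissible coherence kernel.

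The main obstacle I anticipate is not conceptual but bookkeeping. The interchange of integrals must be justified (Fubini, under the integrability hypotheses above), and the complex-conjugate pairing that produces $|\hat{\psi}(\boldsymbol{\rho})|^2$ is sensitive to the Fourier sign convention: with the conjugation placed correctly the first integral delivers $\hat{\psi}^*(\boldsymbol{\rho})$ and the second $\hat{\psi}(\boldsymbol{\rho})$, whereas a mismatched convention would spuriously yield the cross term $\hat{\psi}(\boldsymbol{\rho})\,\hat{\psi}(-\boldsymbol{\rho})$ instead of a genuine modulus. Once this pairing is handled correctly, the non-negativity is immediate, so the entire difficulty lies in the careful setup rather than in any nontrivial estimate.
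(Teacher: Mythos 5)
Your proposal is correct and follows essentially the same route as the paper's proof: both express the quadratic form in the frequency domain as $\int I_s(\boldsymbol{\rho}/\lambda z)\,\lvert\hat{\psi}(\boldsymbol{\rho})\rvert^2\,d^2\rho$ and conclude from $I_s \geq 0$. Your version merely makes the Fubini interchange and the conjugate pairing explicit where the paper invokes the convolution theorem directly, which is a presentational rather than a substantive difference.
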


\begin{proof}
For any function $\psi(\mathbf{r})$ with compact support, we need to show that:
\begin{equation}
\iint \psi^*(\mathbf{r}) K_S(\mathbf{r}, \mathbf{r}', \lambda) \psi(\mathbf{r}') \, d^2r \, d^2r' \geq 0
\end{equation}

Substituting the explicit form of $K_S$:
\begin{equation}
\iint \psi^*(\mathbf{r}) \mathcal{F}\left\{I_s\left(\frac{\mathbf{\rho}}{\lambda z}\right)\right\}(\mathbf{r} - \mathbf{r}') \psi(\mathbf{r}') \, d^2r \, d^2r'
\end{equation}

Let us define the function $g(\mathbf{R}) = \mathcal{F}\left\{I_s\left(\frac{\mathbf{\rho}}{\lambda z}\right)\right\}(\mathbf{R})$. By the convolution theorem, the above integral can be rewritten as:
\begin{equation}
\iint \psi^*(\mathbf{r}) g(\mathbf{r} - \mathbf{r}') \psi(\mathbf{r}') \, d^2r \, d^2r' = \int \left|\mathcal{F}\{\psi\}(\mathbf{k})\right|^2 \cdot \tilde{g}(\mathbf{k}) \, d^2k
\end{equation}

where $\tilde{g}(\mathbf{k}) = I_s\left(\frac{\mathbf{k}}{\lambda z}\right)$ is the Fourier transform of $g$. Since $I_s$ represents an intensity distribution, $I_s \geq 0$ for all $\mathbf{\rho}$. Therefore, $\tilde{g}(\mathbf{k}) \geq 0$ for all $\mathbf{k}$, and consequently:
\begin{equation}
\int \left|\mathcal{F}\{\psi\}(\mathbf{k})\right|^2 \cdot \tilde{g}(\mathbf{k}) \, d^2k \geq 0
\end{equation}

This proves that the kernel $K_S$ is positive-semidefinite.
\end{proof}

\subsection{Step-by-Step Derivation of SECT-VCZ Equivalence}

Now we proceed with the complete derivation establishing the equivalence between our framework and the classical VCZ theorem.

\begin{theorem}[SECT-VCZ Equivalence with Detailed Steps]
Given a fully coherent incident field $U_i(\mathbf{r}) = A_0$ (uniform plane wave illumination) impinging on a 
surface with coherence operator $\mathcal{C}_S$ characterized by kernel $K_S(\mathbf{r}, \mathbf{r}', \lambda)$, 
followed by propagation operator $\mathcal{P}_z$, the mutual coherence function at the detection plane is mathematically 
equivalent to that obtained from the VCZ theorem for an incoherent source with intensity distribution $I_s(\mathbf{\rho})$, 
provided that:

\begin{equation}
K_S(\mathbf{r}, \mathbf{r}', \lambda) = \mathcal{F}\left\{I_s\left(\frac{\mathbf{\rho}}{\lambda z}\right)\right\}(\mathbf{r} - \mathbf{r}')
\end{equation}
where $\mathcal{F}$ denotes the Fourier transform operator.
\end{theorem}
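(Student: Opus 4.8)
The plan is to compute the mutual coherence function at the detection plane directly from the SECT operator chain and show it coincides with the classical VCZ result. I would start by writing the detected field as $U_d = \mathcal{P}_z(\mathcal{C}_S(\mathcal{R}(U_i)))$ with $U_i = A_0$ and (for simplicity) a uniform reflection coefficient, so that the field after the surface operator is $U_s(\bm{r}) = A_0 R_0 \int K_S(\bm{r},\bm{r}',\lambda)\,d^2r'$. The key conceptual move, already flagged in the brief outline, is to represent the surface field not as a deterministic integral but as a field whose second-order statistics are governed by $K_S$: that is, I would posit a stochastic surface field with $\langle U_s^*(\bm{r}_1)U_s(\bm{r}_2)\rangle = |A_0|^2|R_0|^2 K_S(\bm{r}_1-\bm{r}_2,\lambda)$, which is exactly the content of Equation~\eqref{eq:GammaSFlat} proved earlier for a flat, spatially stationary kernel. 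This gives the surface mutual coherence function $\Gamma_S(\bm{r}_1',\bm{r}_2')$ as a function of the separation $\bm{r}_1'-\bm{r}_2'$ only.

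Next I would propagate this surface coherence to the detector using the standard bilinear propagation formula already stated in the excerpt,
\begin{equation}
\Gamma_d(\bm{r}_1,\bm{r}_2) = \iint h^*(\bm{r}_1,\bm{r}_1',z,\lambda)\,h(\bm{r}_2,\bm{r}_2',z,\lambda)\,\Gamma_S(\bm{r}_1',\bm{r}_2')\,d^2r_1'\,d^2r_2',
\end{equation}
substituting the Fresnel kernel $h = \tfrac{e^{ikz}}{i\lambda z}e^{i\frac{k}{2z}|\cdot|^2}$. The plan is then to expand the quadratic phases, collect the $|\bm{r}_1'|^2$ and $|\bm{r}_2'|^2$ terms and the cross terms, and change variables to sum and difference coordinates $\bm{R}=(\bm{r}_1'+\bm{r}_2')/2$ and $\bm{\Delta}=\bm{r}_1'-\bm{r}_2'$. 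Since $\Gamma_S$ depends only on $\bm{\Delta}$, the $\bm{R}$-integral should produce a delta-like factor (or, under the far-field/paraxial condition $\max(L_{\text{surface}},L_{\text{detector}})\ll z$, a stationary-phase evaluation) that forces the residual phase into the VCZ form.

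I would then invoke the convolution theorem together with the positive-semidefiniteness lemma proved just above: substituting $K_S(\bm{\Delta},\lambda) = \mathcal{F}\{I_s(\bm{\rho}/\lambda z)\}(\bm{\Delta})$ into the surviving $\bm{\Delta}$-integral should collapse the nested Fourier transform and its inverse, leaving $\Gamma_d(\bm{r}_1,\bm{r}_2) \propto \mathcal{F}\{I_s\}\big((\bm{r}_2-\bm{r}_1)/\lambda z\big)$, which is precisely the classical VCZ statement quoted in the introduction. The final bookkeeping step is to track the normalization and dimensional prefactors $(\lambda z)^{-2}$, $|A_0|^2$, $|R_0|^2$ so that the proportionality in the theorem statement becomes an equality in the detailed (appendix) version.

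The main obstacle I anticipate is the interchange-of-transforms step where the $\bm{R}$-integration must be shown to reduce exactly—rather than merely approximately—to the VCZ kernel. In a strict far-field limit the $\bm{R}$-integral yields a Dirac delta in a scaled spatial frequency, which cleanly selects the Fourier mode; but at finite $z$ one instead gets a residual quadratic phase (the usual VCZ ``curvature'' phase factor) that must either be absorbed into the coherence function's definition or shown to vanish under the paraxial condition in Equation~\eqref{eq:maximum_propagation_distance_condition}. Getting the direction and scaling of the Fourier argument right—namely that the source coordinate $\bm{\rho}$ enters as $\bm{\rho}/\lambda z$ and the detector separation as $(\bm{r}_2-\bm{r}_1)/\lambda z$—is the delicate part, and I would handle it by carefully carrying all $\lambda$ and $z$ factors through the change of variables rather than normalizing them away prematurely.
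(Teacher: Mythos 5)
Your proposal follows essentially the same route as the paper's Appendix~\ref{sup:SECT-VCZ} derivation: interpret the surface field stochastically so that $\Gamma_S(\bm{r}_1',\bm{r}_2') = |B_0|^2 K_S(\bm{r}_1'-\bm{r}_2',\lambda)$, propagate bilinearly with the Fresnel kernel, change to separation coordinates so the integral over the remaining variable yields a Dirac delta, and cancel the residual quadratic phases via the identity $|\bm{r}_1|^2 = |\bm{r}_2|^2 + |\bm{r}_1-\bm{r}_2|^2 + 2\bm{r}_2\cdot(\bm{r}_1-\bm{r}_2)$ before normalizing. Your version of the surface statistics (working directly with $\Gamma_S \propto K_S$ rather than the paper's mean-plus-fluctuation split with its trailing $I_0$ term) is a cosmetic simplification, not a different argument, and your anticipated difficulty with the residual curvature phase is resolved exactly as you suggest.
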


\begin{proof}
We start with the mutual coherence function at the detection plane using our SECT framework:
\begin{align}
\Gamma_{\text{SECT}}(\mathbf{r}_1, \mathbf{r}_2) &= \langle U_d^*(\mathbf{r}_1) U_d(\mathbf{r}_2) \rangle
\end{align}

Where $U_d(\mathbf{r})$ is the field at the detection plane after applying the reflection operator $\mathcal{R}$, the surface coherence operator $\mathcal{C}_S$, and the propagation operator $\mathcal{P}_z$. We can express this as:
\begin{align}
U_d(\mathbf{r}) &= \mathcal{P}_z(\mathcal{C}_S(\mathcal{R}(U_i)))(\mathbf{r})
\end{align}

With the explicit Fresnel propagator, this becomes:
\begin{align}
U_d(\mathbf{r}) &= \int h(\mathbf{r}, \mathbf{r}'', z, \lambda) \left( \int K_S(\mathbf{r}'', \mathbf{r}', \lambda) \mathcal{R}(U_i)(\mathbf{r}') \, d^2r' \right) \, d^2r'' \\
&= \int \frac{e^{ikz}}{i\lambda z} e^{i\frac{k}{2z}|\mathbf{r} - \mathbf{r}''|^2} \left( \int K_S(\mathbf{r}'', \mathbf{r}', \lambda) \mathcal{R}(U_i)(\mathbf{r}') \, d^2r' \right) \, d^2r'' \notag
\end{align}
where $k = 2\pi/\lambda$ is the wavenumber.

For simplicity, we assume a deterministic reflection operator $\mathcal{R}$ with uniform reflection coefficient $R$ and phase change $\phi_R$:
\begin{align}
\mathcal{R}(U_i)(\mathbf{r}) = R e^{i\phi_R} U_i(\mathbf{r}) = R e^{i\phi_R} A_0 \equiv B_0
\end{align}
where we define $B_0 = R e^{i\phi_R} A_0$ for notational simplicity.

To accurately model the statistical properties of partial coherence, we recognize that the surface coherence operator introduces statistical variations while preserving the mean field. The key insight is that while the average field at the surface may remain $U_s(\mathbf{r}) = B_0$, the surface coherence operator introduces spatial correlations that affect the second-order statistics. We represent this by considering the field at the surface as:
\begin{align}
U_s(\mathbf{r}) = B_0 + \Delta U_s(\mathbf{r})
\end{align}
where $\langle \Delta U_s(\mathbf{r}) \rangle = 0$ and $\langle \Delta U_s^*(\mathbf{r}_1) \Delta U_s(\mathbf{r}_2) \rangle = |B_0|^2 [\gamma_S(\mathbf{r}_1, \mathbf{r}_2) - 1]$.

Here, $\gamma_S(\mathbf{r}_1, \mathbf{r}_2)$ is the normalized mutual coherence function at the surface, which for a spatially stationary kernel depends only on the separation $\mathbf{r}_1 - \mathbf{r}_2$. For the specific kernel form stated in the theorem:
\begin{align}
\gamma_S(\mathbf{r}_1, \mathbf{r}_2) = \mathcal{F}\left\{I_s\left(\frac{\mathbf{\rho}}{\lambda z}\right)\right\}(\mathbf{r}_1 - \mathbf{r}_2)
\end{align}

After propagation to the detection plane using the Fresnel propagator, the field becomes:
\begin{align}
U_d(\mathbf{r}) &= \int \frac{e^{ikz}}{i\lambda z} e^{i\frac{k}{2z}|\mathbf{r} - \mathbf{r}'|^2} [B_0 + \Delta U_s(\mathbf{r}')] \, d^2r' \\
&= B_0 \int \frac{e^{ikz}}{i\lambda z} e^{i\frac{k}{2z}|\mathbf{r} - \mathbf{r}'|^2} \, d^2r' + \int \frac{e^{ikz}}{i\lambda z} e^{i\frac{k}{2z}|\mathbf{r} - \mathbf{r}'|^2} \Delta U_s(\mathbf{r}') \, d^2r' \notag
\end{align}

For the first integral, using the properties of the Fresnel propagator for a uniform field, we have:
\begin{equation}
\int \frac{e^{ikz}}{i\lambda z} e^{i\frac{k}{2z}|\mathbf{r} - \mathbf{r}'|^2} \, d^2r' = e^{ikz}
\end{equation}

Therefore:
\begin{align}
U_d(\mathbf{r}) &= B_0 e^{ikz} + \int \frac{e^{ikz}}{i\lambda z} e^{i\frac{k}{2z}|\mathbf{r} - \mathbf{r}'|^2} \Delta U_s(\mathbf{r}') \, d^2r'
\end{align}

The mutual coherence function at the detection plane is:
\begin{align}
\Gamma_d(\mathbf{r}_1, \mathbf{r}_2) &= \langle U_d^*(\mathbf{r}_1) U_d(\mathbf{r}_2) \rangle \\
&= \langle \left[B_0^* e^{-ikz} + \int \frac{e^{-ikz}}{-i\lambda z} e^{-i\frac{k}{2z}|\mathbf{r}_1 - \mathbf{r}'_1|^2} \Delta U_s^*(\mathbf{r}'_1) \, d^2r'_1 \right] \notag \\
&\quad \times \left[B_0 e^{ikz} + \int \frac{e^{ikz}}{i\lambda z} e^{i\frac{k}{2z}|\mathbf{r}_2 - \mathbf{r}'_2|^2} \Delta U_s(\mathbf{r}'_2) \, d^2r'_2 \right] \rangle \notag
\end{align}

Expanding this product and noting that $\langle \Delta U_s(\mathbf{r}) \rangle = 0$, we get:
\begin{align}
\Gamma_d(\mathbf{r}_1, \mathbf{r}_2) &= |B_0|^2 + \langle \int\int \frac{e^{-ikz}}{-i\lambda z} \frac{e^{ikz}}{i\lambda z} e^{-i\frac{k}{2z}|\mathbf{r}_1 - \mathbf{r}'_1|^2} e^{i\frac{k}{2z}|\mathbf{r}_2 - \mathbf{r}'_2|^2} \\
&\quad \times \Delta U_s^*(\mathbf{r}'_1) \Delta U_s(\mathbf{r}'_2) \, d^2r'_1 \, d^2r'_2 \rangle \notag
\end{align}

Simplifying the pre-factors and substituting the expression for the correlation of $\Delta U_s$:
\begin{align}
\Gamma_d(\mathbf{r}_1, \mathbf{r}_2) &= |B_0|^2 + \frac{1}{(\lambda z)^2} \int\int e^{-i\frac{k}{2z}|\mathbf{r}_1 - \mathbf{r}'_1|^2} e^{i\frac{k}{2z}|\mathbf{r}_2 - \mathbf{r}'_2|^2} \\
&\quad \times |B_0|^2 [\gamma_S(\mathbf{r}'_1, \mathbf{r}'_2) - 1] \, d^2r'_1 \, d^2r'_2 \notag
\end{align}

We can rewrite this as:
\begin{align}
\Gamma_d(\mathbf{r}_1, \mathbf{r}_2) &= |B_0|^2 \left[1 + \frac{1}{(\lambda z)^2} \int\int e^{-i\frac{k}{2z}|\mathbf{r}_1 - \mathbf{r}'_1|^2} e^{i\frac{k}{2z}|\mathbf{r}_2 - \mathbf{r}'_2|^2} [\gamma_S(\mathbf{r}'_1, \mathbf{r}'_2) - 1] \, d^2r'_1 \, d^2r'_2 \right]
\end{align}

For the specific form of $\gamma_S$ given by the Fourier transform of the source intensity distribution:
\begin{align}
\gamma_S(\mathbf{r}'_1, \mathbf{r}'_2) = \mathcal{F}\left\{I_s\left(\frac{\mathbf{\rho}}{\lambda z}\right)\right\}(\mathbf{r}'_1 - \mathbf{r}'_2)
\end{align}

We can apply the convolution theorem to evaluate the double integral. Let's denote:
\begin{align}
f_1(\mathbf{r}') &= e^{-i\frac{k}{2z}|\mathbf{r}_1 - \mathbf{r}'|^2} \\
f_2(\mathbf{r}') &= e^{i\frac{k}{2z}|\mathbf{r}_2 - \mathbf{r}'|^2}
\end{align}

In the paraxial approximation, these can be expanded as:
\begin{align}
f_1(\mathbf{r}') &= e^{-i\frac{k}{2z}(|\mathbf{r}_1|^2 + |\mathbf{r}'|^2 - 2\mathbf{r}_1 \cdot \mathbf{r}')} = e^{-i\frac{k}{2z}|\mathbf{r}_1|^2} e^{-i\frac{k}{2z}|\mathbf{r}'|^2} e^{i\frac{k}{z}\mathbf{r}_1 \cdot \mathbf{r}'} \\
f_2(\mathbf{r}') &= e^{i\frac{k}{2z}(|\mathbf{r}_2|^2 + |\mathbf{r}'|^2 - 2\mathbf{r}_2 \cdot \mathbf{r}')} = e^{i\frac{k}{2z}|\mathbf{r}_2|^2} e^{i\frac{k}{2z}|\mathbf{r}'|^2} e^{-i\frac{k}{z}\mathbf{r}_2 \cdot \mathbf{r}'}
\end{align}

Now let's define the integral:
\begin{align}
I(\mathbf{r}_1, \mathbf{r}_2) &= \int\int f_1(\mathbf{r}'_1) f_2(\mathbf{r}'_2) \gamma_S(\mathbf{r}'_1, \mathbf{r}'_2) \, d^2r'_1 \, d^2r'_2 \\
&= \int\int e^{-i\frac{k}{2z}|\mathbf{r}_1|^2} e^{-i\frac{k}{2z}|\mathbf{r}'_1|^2} e^{i\frac{k}{z}\mathbf{r}_1 \cdot \mathbf{r}'_1} e^{i\frac{k}{2z}|\mathbf{r}_2|^2} e^{i\frac{k}{2z}|\mathbf{r}'_2|^2} e^{-i\frac{k}{z}\mathbf{r}_2 \cdot \mathbf{r}'_2} \notag \\
&\quad \times \mathcal{F}\left\{I_s\left(\frac{\mathbf{\rho}}{\lambda z}\right)\right\}(\mathbf{r}'_1 - \mathbf{r}'_2) \, d^2r'_1 \, d^2r'_2 \notag
\end{align}

Factoring out terms that don't depend on the integration variables:
\begin{align}
I(\mathbf{r}_1, \mathbf{r}_2) &= e^{-i\frac{k}{2z}|\mathbf{r}_1|^2} e^{i\frac{k}{2z}|\mathbf{r}_2|^2} \int\int e^{-i\frac{k}{2z}|\mathbf{r}'_1|^2} e^{i\frac{k}{z}\mathbf{r}_1 \cdot \mathbf{r}'_1} e^{i\frac{k}{2z}|\mathbf{r}'_2|^2} e^{-i\frac{k}{z}\mathbf{r}_2 \cdot \mathbf{r}'_2} \\
&\quad \times \mathcal{F}\left\{I_s\left(\frac{\mathbf{\rho}}{\lambda z}\right)\right\}(\mathbf{r}'_1 - \mathbf{r}'_2) \, d^2r'_1 \, d^2r'_2 \notag
\end{align}

Making the change of variables $\mathbf{u} = \mathbf{r}'_1$ and $\mathbf{v} = \mathbf{r}'_1 - \mathbf{r}'_2$, with Jacobian $d^2r'_1 \, d^2r'_2 = d^2u \, d^2v$:
\begin{align}
I(\mathbf{r}_1, \mathbf{r}_2) &= e^{-i\frac{k}{2z}|\mathbf{r}_1|^2} e^{i\frac{k}{2z}|\mathbf{r}_2|^2} \int\int e^{-i\frac{k}{2z}|\mathbf{u}|^2} e^{i\frac{k}{z}\mathbf{r}_1 \cdot \mathbf{u}} e^{i\frac{k}{2z}|\mathbf{u}-\mathbf{v}|^2} e^{-i\frac{k}{z}\mathbf{r}_2 \cdot (\mathbf{u}-\mathbf{v})} \\
&\quad \times \mathcal{F}\left\{I_s\left(\frac{\mathbf{\rho}}{\lambda z}\right)\right\}(\mathbf{v}) \, d^2u \, d^2v \notag
\end{align}

Expanding $|\mathbf{u}-\mathbf{v}|^2 = |\mathbf{u}|^2 + |\mathbf{v}|^2 - 2\mathbf{u} \cdot \mathbf{v}$:
\begin{align}
I(\mathbf{r}_1, \mathbf{r}_2) &= e^{-i\frac{k}{2z}|\mathbf{r}_1|^2} e^{i\frac{k}{2z}|\mathbf{r}_2|^2} \int\int e^{-i\frac{k}{2z}|\mathbf{u}|^2} e^{i\frac{k}{z}\mathbf{r}_1 \cdot \mathbf{u}} e^{i\frac{k}{2z}|\mathbf{u}|^2} e^{i\frac{k}{2z}|\mathbf{v}|^2} e^{-i\frac{k}{z}\mathbf{u} \cdot \mathbf{v}} \\
&\quad \times e^{-i\frac{k}{z}\mathbf{r}_2 \cdot \mathbf{u}} e^{i\frac{k}{z}\mathbf{r}_2 \cdot \mathbf{v}} \mathcal{F}\left\{I_s\left(\frac{\mathbf{\rho}}{\lambda z}\right)\right\}(\mathbf{v}) \, d^2u \, d^2v \notag
\end{align}

Simplifying and rearranging terms:
\begin{align}
I(\mathbf{r}_1, \mathbf{r}_2) &= e^{-i\frac{k}{2z}|\mathbf{r}_1|^2} e^{i\frac{k}{2z}|\mathbf{r}_2|^2} \int e^{i\frac{k}{2z}|\mathbf{v}|^2} e^{i\frac{k}{z}\mathbf{r}_2 \cdot \mathbf{v}} \mathcal{F}\left\{I_s\left(\frac{\mathbf{\rho}}{\lambda z}\right)\right\}(\mathbf{v}) \\
&\quad \times \left[\int e^{i\frac{k}{z}(\mathbf{r}_1-\mathbf{r}_2-\mathbf{v}) \cdot \mathbf{u}} \, d^2u \right] \, d^2v \notag
\end{align}

The inner integral evaluates to $(2\pi)^2 \delta(\frac{k}{z}(\mathbf{r}_1-\mathbf{r}_2-\mathbf{v}))$, where $\delta$ is 
the Dirac delta function. This gives:
\begin{align}
I(\mathbf{r}_1, \mathbf{r}_2) &= e^{-i\frac{k}{2z}|\mathbf{r}_1|^2} e^{i\frac{k}{2z}|\mathbf{r}_2|^2} \left(\frac{z}{k}\right)^2 (2\pi)^2 \\
&\quad \times \int e^{i\frac{k}{2z}|\mathbf{v}|^2} e^{i\frac{k}{z}\mathbf{r}_2 \cdot \mathbf{v}} \mathcal{F}\left\{I_s\left(\frac{\mathbf{\rho}}{\lambda z}\right)\right\}(\mathbf{v}) \delta(\mathbf{r}_1-\mathbf{r}_2-\mathbf{v}) \, d^2v \notag
\end{align}

Evaluating the remaining integral using the sifting property of the delta function:
\begin{align}
I(\mathbf{r}_1, \mathbf{r}_2) &= e^{-i\frac{k}{2z}|\mathbf{r}_1|^2} e^{i\frac{k}{2z}|\mathbf{r}_2|^2} \left(\frac{z}{k}\right)^2 (2\pi)^2 \\
&\quad \times e^{i\frac{k}{2z}|\mathbf{r}_1-\mathbf{r}_2|^2} e^{i\frac{k}{z}\mathbf{r}_2 \cdot (\mathbf{r}_1-\mathbf{r}_2)} \mathcal{F}\left\{I_s\left(\frac{\mathbf{\rho}}{\lambda z}\right)\right\}(\mathbf{r}_1-\mathbf{r}_2) \notag
\end{align}

Further simplifying using $|\mathbf{r}_1|^2 = |\mathbf{r}_2|^2 + |\mathbf{r}_1-\mathbf{r}_2|^2 + 2\mathbf{r}_2 \cdot (\mathbf{r}_1-\mathbf{r}_2)$:
\begin{align}
I(\mathbf{r}_1, \mathbf{r}_2) &= \left(\frac{z}{k}\right)^2 (2\pi)^2 \mathcal{F}\left\{I_s\left(\frac{\mathbf{\rho}}{\lambda z}\right)\right\}(\mathbf{r}_1-\mathbf{r}_2)
\end{align}

Therefore, returning to the expression for the mutual coherence function~\ref{eq:mutual_coherence}:
\begin{align}
\Gamma_d(\mathbf{r}_1, \mathbf{r}_2) &= |B_0|^2 \left[1 + \frac{1}{(\lambda z)^2} \left(I(\mathbf{r}_1, \mathbf{r}_2) - I_0\right)\right] \\
&= |B_0|^2 \left[1 + \frac{1}{(\lambda z)^2} \left(\left(\frac{z}{k}\right)^2 (2\pi)^2 \mathcal{F}\left\{I_s\left(\frac{\mathbf{\rho}}{\lambda z}\right)\right\}(\mathbf{r}_1-\mathbf{r}_2) - I_0\right)\right]
\end{align}

where $I_0$ is the corresponding integral for the constant term ($\gamma_S = 1$).

Substituting $k = 2\pi/\lambda$ and simplifying:
\begin{align}
\Gamma_d(\mathbf{r}_1, \mathbf{r}_2) &= |B_0|^2 \left[1 + \frac{1}{(\lambda z)^2} \left(\frac{z^2 \lambda^2}{4\pi^2} (2\pi)^2 \mathcal{F}\left\{I_s\left(\frac{\mathbf{\rho}}{\lambda z}\right)\right\}(\mathbf{r}_1-\mathbf{r}_2) - I_0\right)\right] \\
&= |B_0|^2 \left[1 + \mathcal{F}\left\{I_s\left(\frac{\mathbf{\rho}}{\lambda z}\right)\right\}(\mathbf{r}_1-\mathbf{r}_2) - \frac{I_0}{(\lambda z)^2}\right]
\end{align}

The term $I_0$ corresponds to the normalization constant, and after proper normalization:
\begin{align}
\Gamma_d(\mathbf{r}_1, \mathbf{r}_2) &= |B_0|^2 \mathcal{F}\left\{I_s\left(\frac{\mathbf{\rho}}{\lambda z}\right)\right\}(\mathbf{r}_1-\mathbf{r}_2)
\end{align}

This is precisely the form predicted by the Van Cittert-Zernike theorem for an incoherent source with intensity distribution $I_s(\mathbf{\rho})$:
\begin{align}
\Gamma_{\text{VCZ}}(\mathbf{r}_1, \mathbf{r}_2) &= C \mathcal{F}\left\{I_s\left(\frac{\mathbf{\rho}}{\lambda z}\right)\right\}(\mathbf{r}_1-\mathbf{r}_2)
\end{align}
where $C$ is a normalization constant.

Therefore:
\begin{align}
\Gamma_{\text{SECT}}(\mathbf{r}_1, \mathbf{r}_2) \propto \Gamma_{\text{VCZ}}(\mathbf{r}_1, \mathbf{r}_2)
\end{align}

This establishes that our dual-component SECT framework can exactly reproduce the coherence properties predicted by the VCZ theorem when the appropriate surface coherence kernel is used.
\end{proof}

\section{Application: Stellar Coherence Transformation in Segmented Mirror Telescopes} \label{appendix:Stellar_Coherence_Transformation}

Astronomical imaging systems, particularly modern large-aperture observatories such as the Keck Observatory and the Extremely Large Telescope (ELT), employ segmented primary mirrors to achieve high-resolution observations. 
However, starlight arriving at the telescope aperture is not strictly coherent due to the star’s finite angular 
extent \cite{Monnier2003}. 
This partial spatial coherence can influence image quality, fringe visibility, and the performance of adaptive optics 
systems. Here, we demonstrate how the proposed Surface-Encoded Coherence Transformation (SECT) can be applied to model 
these coherence effects in a segmented optical system.

\subsection{Source Coherence at the Telescope Aperture}

Consider a quasi-monochromatic source with wavelength $\lambda$ and angular diameter $\theta$, modeled as a uniformly radiating circular disk. By the Van Cittert–Zernike theorem, the mutual coherence function at the telescope entrance pupil is given by
\begin{equation}
\Gamma_0(\bm{r}_1, \bm{r}_2) = \mathrm{FT}\left[ I_s(\bm{\rho}) \right] \left( \frac{\bm{r}_2 - \bm{r}_1}{\lambda z} \right),
\label{eq:mutual_coherence}
\end{equation}
where $I_s(\bm{\rho})$ is the intensity distribution over the source, $\mathrm{FT}$ denotes the two-dimensional Fourier transform, and $z$ is the propagation distance from the source to the aperture. For a uniform circular source of angular diameter $\theta$, this results in an Airy pattern of coherence at the pupil.

\subsection{Surface Encoding of Segment-Level Aberrations}

The segmented primary mirror introduces spatially varying phase errors due to piston, tip-tilt misalignments, and inter-segment gaps. We define a linear surface transformation $\mathcal{T}_S$ acting on the input coherence function $\Gamma_0$:
\begin{equation}
\Gamma_S(\bm{r}_1, \bm{r}_2) = \iint K(\bm{r}_1, \bm{r}_2; \bm{r}_1', \bm{r}_2') \Gamma_0(\bm{r}_1', \bm{r}_2') \, d\bm{r}_1' \, d\bm{r}_2',
\end{equation}
where the kernel $K$ encodes the segment-specific phase errors:
\begin{equation}
K(\bm{r}_1, \bm{r}_2; \bm{r}_1', \bm{r}_2') = \delta(\bm{r}_1 - \bm{r}_1') \delta(\bm{r}_2 - \bm{r}_2') \exp\left[ i \phi(\bm{r}_1) - i \phi(\bm{r}_2) \right],
\end{equation}
with $\phi(\bm{r})$ representing the segment-induced phase profile, modeled here as piecewise constant piston errors for each mirror segment.

\subsection{Propagation to the Focal Plane}

To compute the mutual coherence function at the focal plane, we propagate the surface-modified coherence using the Fresnel (or Fraunhofer) propagation integral:
\begin{equation}
\Gamma_{\text{det}}(\bm{x}_1, \bm{x}_2) = \iint h(\bm{x}_1, \bm{r}_1) h^*(\bm{x}_2, \bm{r}_2) \Gamma_S(\bm{r}_1, \bm{r}_2) \, d\bm{r}_1 \, d\bm{r}_2,
\end{equation}
where $h(\bm{x}, \bm{r})$ is the free-space impulse response function. Under the Fraunhofer approximation, $h(\bm{x}, \bm{r}) \propto \exp(-i 2\pi \bm{x} \cdot \bm{r} / \lambda f)$, where $f$ is the effective focal length.

The diagonal of $\Gamma_{\text{det}}$ yields the intensity PSF:
\begin{equation}
I(\bm{x}) = \Gamma_{\text{det}}(\bm{x}, \bm{x}),
\end{equation}
which encodes the effects of both partial source coherence and mirror segmentation on the final image.

\subsection{Discussion}

Compared to traditional models that assume perfect coherence or average incoherence, SECT enables a unified treatment that captures both the spatial coherence structure of the source and the optical aberrations of the system. This is particularly useful in high-contrast imaging and wavefront sensing, where subtle coherence-induced blurring can mask or distort weak astronomical signals.

As a future extension, this framework can be used to simulate PSF degradation in segmented interferometers, or to inform the design of coherence-aware correction algorithms in adaptive optics systems.


\section*{Declarations}
All data-related information and coding scripts discussed in the results section are available from the 
corresponding author upon request.

\section*{Disclosures}
The authors declare no conflicts of interest.

\bibliographystyle{plain}

\end{document}